\documentclass[a4paper,11pt]{amsart}
\usepackage[utf8]{inputenc}
\usepackage{amsaddr}
\usepackage{amsmath}
\usepackage{calrsfs}
\usepackage{graphicx,xcolor}
\usepackage[colorlinks = true,linkcolor = blue,urlcolor  = blue,citecolor = blue,anchorcolor = blue]{hyperref}
\usepackage{dsfont}
\usepackage{lipsum}
\usepackage{tikz}
\usepackage{forest}
\usepackage{float}
\usepackage{cite}

\setlength{\parskip}{.7ex plus .2ex minus .2ex}
\setlength{\emergencystretch}{1em}
\setlength{\mathsurround}{1pt}

\addtolength{\textwidth}{3em}
\addtolength{\hoffset}{-1.5em}
\addtolength{\textheight}{4ex}
\addtolength{\voffset}{-2ex}


\numberwithin{equation}{section}


\theoremstyle{definition}
\newtheorem{dfn}{Definition}[section]

\theoremstyle{plain}
\newtheorem{thm}{Theorem}[section]

\newtheorem{cor}{Corollary}[section]
\newtheorem{lmm}{Lemma}[section]
\theoremstyle{definition}
\newtheorem{rem}{Remark}[section]


\newcommand{\N}{\mathbb{N}}
\newcommand{\R}{\mathbb{R}}
\newcommand{\E}{\mathbb{E}}
\renewcommand{\P}{\mathbb{P}}

\newcommand{\F}{\mathcal{F}}

\newcommand{\e}{\mathrm{e}}

\newcommand{\deq}{\overset{d}{=}}

\newcommand{\1}{\mathds{1}}
\renewcommand{\d}{\mathrm{d}}

\newcommand{\pa}{\partial}
\newcommand{\erf}{\mathrm{erf}}
\newcommand{\BS}{{\mathrm{\scriptstyle{BS}}}}
\newcommand{\Call}{{\mathrm{\scriptstyle{Call}}}}


\begin{document}
\title[Hedging in jump diffusion model]
{Hedging in Jump Diffusion Model with Transaction Costs}

\date{\today}

\author[Almani]{\vspace*{-0.5cm}Hamidreza Maleki Almani}
\email{hmaleki@uwasa.fi}

\author[Shokrollahi]{\vspace*{-1cm}Foad Shokrollahi}
\email{foad.shokrollahi@uwasa.fi}

\author[Sottinen]{\vspace*{-1cm}Tommi Sottinen}
\email{tommi.sottinen@iki.fi}

\address{Department of Mathematics and Statistics, School of Technology and Innovation, University of Vaasa, P.O. Box 700, FIN-65101 Vaasa, FINLAND}
\begin{abstract}
We consider the jump-diffusion risky asset model and study its conditional prediction laws. Next, we explain the conditional least square hedging strategy and calculate its closed form for the jump-diffusion model, considering the Black-Scholes framework with interpretations related to investor priorities and transaction costs. We investigate the explicit form of this result for the particular case of the European call option under transaction costs and formulate recursive hedging strategies. Finally, we present a decision tree, table of values, and figures to support our results.
\end{abstract}

\keywords{Delta-hedging;
Jump diffusion model;
Minimal variance hedge;
Transaction costs;
Option pricing}

\subjclass[2020]{91G10; 91G20; 91G80; 60G51; 60J60; 60J65; 60J70; 60J76}

\maketitle

\section{Introduction}
The decade of 1970-80 was undoubtedly the initiation of modern finance. During that period, the main issue economists were eager to solve was to evaluate a closed form for the option price. Pioneer among them, Black–Scholes (BS) (1973) \cite{black1973pricing} applied the geometric Brownian motion (gBm) of Samuelson (1965) \cite{samuelson1965rational} adjusted by the theoretical framework of finance, the {\it no-arbitrage} theory. This continuous hedging-pricing model was later employed successfully for pricing of rational options by Merton (1976) \cite{merton1973theory}, and extened for transaction costs and taxes by Ingersoll (1976) \cite{ingersoll1976theoretical}.

The gBm is the stochastic differential equation (SDE):

\begin{equation}\label{eq:gBm}
	\frac{\d S_t}{S_t}\, =\,\mu\,\d t\,+\,\sigma\,\d W_t,
\end{equation}

where $S$ is the asset price, $W$ is the Wiener process, and $\mu,\sigma$ are constants.
In a {\it self-financing} trading strategy, the investor holds a portfolio $\eta,\pi$ of some risk-free and risky assets respectively with prices $A, S$, specially the case that $A$ is the value of the cash account. So, the value $V$ of the associated portfolio at time $t\ge 0$ changes by
\begin{equation}\label{eq:SF_dynamic}
	\d V_t = \eta_t\,\d A_t +\pi_t\,\d S_t.
\end{equation}
Applying It\^o's (1951) \cite{ito1944integral,ito1951sde} calculus, Black--Scholes showed that if
\begin{enumerate}
	\item[(i)] The investment strategy is self-financing,
	\item[(ii)] The interest rate $r$ of the risk-free asset $A$ is constant,
	\item[(iii)] The risky asset $S$ has the gBm dynamic \eqref{eq:gBm},
	\item[(iv)] For some $g\in\mathcal{C}^{1,2}([0,T]\times \R_+)$ we have $V_t = g(t,S_t)$,
\end{enumerate}
then
\begin{equation}\label{eq:BS}
	\frac{\pa g}{\pa t} + rx\frac{\pa g}{\pa x} + \frac12\sigma^2 x^2\frac{\pa^2 g}{\pa x^2} - rg = 0,
\end{equation}
and from (iv), on $[0,T]$
\begin{equation}\label{eq:BS_pi}
	\pi_t = \frac{\pa g}{\pa x}(t,S_t).
\end{equation}

By the no-arbitrage theory, value of the portfolio $V$ must match to the value of the option of the risky asset that the investor holds. So, (iv) is equivalent to having the option price depend only on time $t$ and the asset price $S_t$. This is valid for the European {\it vanilla} options, and for such option's boundary conditions, the equation \eqref{eq:BS} has a unique solution. Later on, Leland (1985) \cite{leland1985option} developed this approach under transaction costs.
For the mathematical reconstruction, see  related references \cite{Denis-Kabanov-2010,Kabanov-Safarian-2009}.\\

As explained above, the $BS$ model makes it possible to identify a {\it perfect hedging} strategy. In other word, if in a market we have all (i)-(iv) valid, postulating of the no-arbitrage theory (which is interesting in equilibrium finance) is possible. Although, this was a marvelous inception that is still applicable on {\it business time} (see \cite{geman2001asset,geman2001time}), some criticisms arose later. In fact, the main criticism was even published some years before Black--Scholes work, that is the Mandelbrot \& Van Ness (1968) \cite{mandelbrot1968fractional}. Working on stock price data, they had concluded the noise of price is far different with a Wiener process. This rules out the gBm model \eqref{eq:gBm} and also returns the possibility of arbitrage.\\

In modern finance, to overcome such issues, mathematical economists have applied (or some introduced) other stochastic processes in \eqref{eq:gBm} place of Wiener process. Some considered the {\it stable} processes (see \cite{ito2006essentials,mandelbrot1997variation,samoradnitsky-Taqqu-1994}), some tried processes with {\it long memory} (see \cite{Azmoodeh-2013,Shokrollahi-Kilicman-Magdziarz-2016,Wang-2010a,Wang-2010b,Wang-Yan-Tang-Zhu-2010,Wang-Zhu-Tang-Yan-2010}), and most recently they have studied the processes including {\it jumps} (see \cite{carr2002fine,geman2002pure,madan2001financial,kou2002jump,kou2001option,MERTON1976125}). The simplest model with these jump processes \eqref{eq:gBm} is the {\it jump diffusion (JD)} model
\begin{equation}\label{eq:JD}
	\frac{\d S_t}{S_{t^-}} = \mu\, \d t + \sigma\, \d W_t + \d J_t.
\end{equation}
For the JD model with compound Poisson jump of independent normal processes, \cite{MERTON1976125} evaluated the hedging strategy and option prices. \cite{das1996exact} also calculated a closed forms of hedging and pricing when the interest rate has the JD model. Although the general JD model nicely calibrates the price data, it allows arbitrage opportunities. So, the perfect hedging is impossible for the general JD model (see \cite{lamberton2011introduction} chapter 7).\\

For models where the BS hedging method is not applicable, economists investigated other hedging methods. F\"ollmer and Schweizer (1991) \cite{follmer1991hedging} initiated hedging associated to the {\it risk measure} and Schweizer (1992-95) developed this idea in \cite{schweizer1992mean,schweizer1994approximating,schweizer1995minimal}. However, as the BS is still valid on {\it business time}, a financial investor naturally may ask the following question:
\begin{center}
	{\it ``In an incomplete market, what is the discrete strategy with the \textbf{closest price} to the BS strategy of the associated background complete market?"}
\end{center}
Motivated to answer this, Sottinen \& Viitasaari (2018) \cite{sottinen2018conditional} introduced the {\it conditional mean hedging} (CMH) method by employing the conditional laws from their prior paper \cite{Sottinen-Viitasaari-2016-preprintb}. This hedging is based on the coalitional average of the new model's value of portfolio, with respect to the coalitional average of the BS portfolio's value. Shokrollahi \& Sottinen (2017) \cite{SHOKROLLAHI201785} studied this method for the {\it fractional} Black-Scholes model.\\

In our recent article \cite{almani2024prediction}, we studied the required conditional laws for the Volterra-jump models, which JD is a special case of. We aimed to find the CMH strategy when the underlying asset has the general JD model. However, considering the numerical results revealed a critical fact:
\begin{center}
	{\it ``Even if the conditional means of some stochastic processes are equal, still the \textbf{minimum distance} of those processes is not guaranteed!"}
\end{center}
Therefore, we considered a stronger approach. The idea is to investigate a discrete strategy that has the minimum conditional difference with the BS strategy. This is the {\it conditional least-square hedging} (CLH). Surprisingly this stronger hedging strategy is possible for the JD model. Using the conditional formulas of our previous article, we calculate the (CLH) strategy for the JD model under transaction costs here.\\

The rest of this paper is as follows. In Section \ref{sect: pre} we study the conditional prediction laws for the jump diffusion model. In Section \ref{sect: CMH} we consider the CMH strategies for this model under proportional transaction costs with a decision tree for the investor. In Section \ref{sect: CLH} we explain and investigate the CLH strategies for it. In Section \ref{sect: EU_call} we investigate the explicit form of CLH strategies for the European call options. In Section \ref{sect: simulations} we visualize our finding by some simulations and figures.\\

\section{Preliminaries}\label{sect: pre}

We consider the discounted pricing model where the riskless interest rate $r$ is zero ($\d A = 0$) and the risky asset $S$ is given by the following jump--diffusion stochastic differential equation (SDE)
\begin{equation}\label{eq:sde-S}
\frac{\d S_t}{S_{t^-}} = \mu\, \d t + \sigma\, \d W_t + \d J_t,
\end{equation}
where $W$ is the standard Brownian motion and $J$ is an independent compound Poisson jump process with intensity $\lambda$ and jump distribution $F$. In other words
$$
J_t = \sum_{k=1}^{N_t} \xi_k,
$$
where $N=(N_t)_{t\in[0,T]}$ is a Poisson process with intensity $\lambda$ and the jumps $\xi_k$, $k\in\N$, are i.i.d. with common distribution $F$, and they are independent of the Poisson process $N$ and the Brownian motion $W$. We denote
\begin{eqnarray*}
	\epsilon_1 &=& \E[\xi_k], \\
	\epsilon_2 &=& \E[\xi_k^2].
\end{eqnarray*}
The path-wise solution of the SDE \eqref{eq:sde-S} is
\begin{equation}\label{eq:S}
S_t = S_0\,\e^{(\mu - \frac{\sigma^2}{2})t + \sigma W_t}\prod_{k=1}^{N_t}(1+\xi_k),
\end{equation}
see e.g. \cite{lamberton2011introduction}. Note that as the asset is positive $S_t>0, \forall t\ge 0$, we must have $\xi_k>-1, \forall k\ge 1$.\\

Now, as in real world, continuous trading is impossible, we assume the trading only takes place at some given fixed time points $0=t_0<t_1<\ldots<t_N=T$. The trader by the self-financing strategy would hedge at least once on each time period $[t_{i},t_{i+1})$. So, the simplest strategy he can take is to hold the discrete amount
\begin{equation}
	\pi^N_t = \sum_{i=0}^{N-1} \pi_{t_{i}}^N\1_{[t_{i},t_{i+1})}(t),
\end{equation}
of the risky asset $S$ at time $t\ge 0$. Then, the value of the strategy $\pi^N$ is the integral of \eqref{eq:SF_dynamic} minus the proportional transaction cost of it. That is
\begin{equation}\label{eq:kappa}
V^{\pi^N,\kappa}_t = V^{\pi^N,\kappa}_0 + \int_0^t \pi^N_u \, \d S_u - \int_0^t \kappa S_u |\d \pi^N_u|,\notag
\end{equation}
where $\kappa\in (0,1)$ is the proportion of transaction cost.
\begin{dfn}\label{dfn: CMH}
	Consider a European option of the type $f(S_T)$, with convex or concave payoff $f$.
	Let $\pi$ be its Black--Scholes strategy.
	We call the discrete-time strategy $\pi^N$ is a \textit{conditional mean hedging} (CMH) strategy, if for all trading times $t_i$,
	\begin{equation}\label{eq:cm-hedge}
		\E\left[ V^{\pi^N,\kappa}_{t_{i+1}}\,|\, \F_{t_i}\right] = \E\left[ V^{\pi}_{t_{i+1}}\,|\, \F_{t_i}\right],
	\end{equation}
	where $\F_{t_i}$ is the information filter, generated by the asset price process $S$ up to time $t_i$.
\end{dfn}
To extract the strategy $\pi^N$ for such a definition, we need to study those conditional expectations. The following lemmas provide the required conditional laws.
To distinguish the gBm asset model, implied to the BS strategy in the background complete market, from the JD model of the incomplete market, we denote it by $S^{BS}$, and that is
\begin{equation}\label{eq: dS_BS}
	\frac{\d\displaystyle S^{\BS}_t}{\displaystyle S^{\BS}_t}\, =\,\mu\,\d t\,+\,\sigma\,\d W_t,
\end{equation}
with the solution
\begin{equation}\label{eq: S_BS}
	S^{\BS}_t = S^{\BS}_0\,\e^{(\mu - \frac{\sigma^2}{2})t + \sigma W_t},
\end{equation}
and we note for all $t\ge 0$
\begin{equation}\label{eq: JD_BS}
	S_t = S^\BS_t\prod_{j=1}^{N_t}(1+\xi_j).
\end{equation}
\begin{lmm}[Conditional Moments]\label{lmm: Con_Moments}
	For $t\ge u$, and $k\ge 1$ the conditional JD and gBm processes, i.e., $S_t(u) := S_t|\F_u$ and $S^\BS_t(u) := S^\BS_t|\F_u$ has the moments
	\begin{align}
		&\widehat{S^k_t}(u) = \E\left[S^k_t\Big|\F_u\right] = S^k_u\,\exp\left[\Big(k\mu + k(k-1)\frac{\sigma^2}{2} + \lambda\sum_{j=1}^k\binom{k}{j}\epsilon_j\Big)(t-u)\right],\label{eq: hat_Sk}\\
		&\widehat{(S^\BS_t)^k}(u) = \E\left[(S^\BS_t)^k\Big|\F_u\right] = (S^\BS_u)^k\,\exp\left[\Big(k\mu + k(k-1)\frac{\sigma^2}{2}\Big)(t-u)\right],\label{eq: hat_BSk}
	\end{align}
	where $\epsilon_j = \E[\xi^j]$. In particular for $k=1,2$	
	\begin{align}
		\hat S_t(u) &= \E\left[S_t\Big|\F_u\right] = S_u\,\exp\Big[(\mu +\lambda\epsilon_1)(t-u)\Big],\label{eq: hat_S}\\
        \widehat{S^2_t}(u) &= \E\left[S^2_t\Big|\F_u\right] = S^2_u\,\exp\left[\Big(2\mu + \sigma^2 + 2\lambda\epsilon_1 + \lambda\epsilon_2\Big)(t-u)\right].\label{eq: hat_S2}
	\end{align}
\end{lmm}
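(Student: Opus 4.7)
The plan is to exploit the multiplicative structure of the path-wise solution \eqref{eq: S} to write $S_t = S_u \cdot Y_{u,t}$ with $Y_{u,t}$ independent of $\F_u$, thereby reducing the conditional moments to unconditional expectations of independent log-normal and Poisson-compounded pieces.

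For \eqref{eq: hat_BSk} I would write
\[
S^\BS_t = S^\BS_u\,\exp\!\left[(\mu - \tfrac{\sigma^2}{2})(t-u) + \sigma(W_t - W_u)\right],
\]
raise to the $k$-th power, pull out the $\F_u$-measurable factor $(S^\BS_u)^k$, and apply the Gaussian moment generating function to $k\sigma(W_t - W_u)$, which has variance $k^2\sigma^2(t-u)$. Collecting exponents gives the claimed $(k\mu + k(k-1)\sigma^2/2)(t-u)$.

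For \eqref{eq: hat_Sk} the analogous factorization using \eqref{eq: JD_BS} produces an additional factor $\prod_{j=N_u+1}^{N_t}(1+\xi_j)^k$ that is independent of both the Brownian increment and of $\F_u$, by the independent-increments property of $N$ together with the independence of $W$, $N$, and the $(\xi_j)$'s. Conditioning on $N_t - N_u$, which is Poisson with mean $\lambda(t-u)$, the i.i.d.\ structure of the jumps reduces this to $\E[m_k^{N_t - N_u}]$, where $m_k := \E[(1+\xi_1)^k] = 1 + \sum_{j=1}^{k}\binom{k}{j}\epsilon_j$ by the binomial theorem. The Poisson probability generating function then yields $\E[m_k^{N_t - N_u}] = \exp[\lambda(m_k - 1)(t-u)]$, and combining this with the Gaussian factor produces the stated formula. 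Specializing to $k=1,2$ gives \eqref{eq: hat_S}--\eqref{eq: hat_S2} after arithmetic simplification.

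The only real obstacle is the bookkeeping of independence: one must carefully invoke that the triple consisting of the post-$u$ Brownian increment, the post-$u$ Poisson increment, and the jumps indexed after $N_u$ is independent of $\F_u$, which in turn requires the Lévy structure of $W$ and $J$. Once this is in place, the argument reduces to the standard Gaussian MGF and Poisson PGF identities, with no further analytical difficulty.
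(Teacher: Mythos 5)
Your proposal is correct and follows essentially the same route as the paper: factor out the $\F_u$-measurable part, use independence of the post-$u$ Brownian and Poisson increments and of the jumps, then apply the Gaussian moment generating function and the Poisson generating function (which the paper carries out by summing the Poisson series explicitly). No gaps.
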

\begin{proof}
	\begin{align*}
		\widehat{S^k_t}(u) &= \E\left[S^k_t\Big|\F_u\right]\\
		&= S_0^k\,\e^{k(\mu -\frac{\sigma^2}{2})t}\,\E\Big[\e^{k\sigma W_t}\Big|\F_u^W\Big]\,\E\Big[\prod_{j=1}^{N_t}(1+\xi_i)^k\Big|\F_u^{\xi,N}\Big]\\
		&= S_0^k\,\e^{k(\mu -\frac{\sigma^2}{2})t}\,\e^{k\sigma W_u}\,\E\Big[\e^{k\sigma (W_t-W_u)}\Big|\F_u^W\Big]\\
		&\times\prod_{j=1}^{N_u}(1+\xi_j)^k\,\E\Big[\prod_{j=N_u+1}^{N_t}(1+\xi_j)^k\Big|\F_u^{\xi,N}\Big]\\
		&= S_u^k\,\e^{k(\mu -\frac{\sigma^2}{2})(t-u)}\,\E\Big[\e^{k\sigma W_{t-u}}\Big]\,\E\Big[\prod_{j=1}^{N_{t-u}}(1+\xi_j)^k\Big]\\
		&= S_u^k\,\e^{(k\mu + k(k-1)\frac{\sigma^2}{2})(t-u)}\,\sum_{n=0}^\infty\E\Big[\prod_{j=1}^n(1+\xi_j)^k\Big]\,\P\big[N_{t-u}=n\big]\\
		&= S_u\,\e^{(k\mu + k(k-1)\frac{\sigma^2}{2})(t-u)}\,\sum_{n=0}^\infty\frac{\lambda^n(t-u)^n}{n!}\,\e^{-\lambda(t-u)}\left(\E[(1+\xi)^k]\right)^n\\
		&= S_u^k\,\e^{(k\mu + k(k-1)\frac{\sigma^2}{2}-\lambda)(t-u)}\sum_{n=0}^\infty\frac{\lambda^n(t-u)^n}{n!}\left(\sum_{j=0}^k\binom{k}{j}\E[\xi^j]\right)^n\\
		&= S^k_u\,\e^{\left(k\mu + k(k-1)\frac{\sigma^2}{2} + \lambda\sum_{j=1}^k\binom{k}{j}\epsilon_j\right)(t-u)}.
	\end{align*}
\end{proof}
\begin{rem}
	Formula \eqref{eq: hat_S} is a special case of \cite[Theorem 3.1]{almani2024prediction}, for $X=\sigma W+J$. However, the properties of Wiener process increments cause a better closed form for the conditional laws here.
\end{rem}
\begin{lmm}[Conditional Expectation]\label{lmm: Con_f}
	For $t\ge u$, the conditional processes $f^{\BS}_t(u) := f(t,S^{\BS}_t)|\F_u$ and $S.f^{\BS}_t(u):=S_t\cdot f(t,S^{\BS}_t)|\F_u$, we have the following rules
	\begin{align}
		&\label{eq: hat_f}\widehat{f^\BS_t}(u) = \E\left[f(t,S^\BS_t)\Big|\F_u\right]\\
		&= \int_{-\infty}^\infty f\Big(t,S^\BS_u\e^{(\mu - \frac{\sigma^2}{2})(t-u) + \sigma z}\Big)\phi(z;0;t-u)\d z,\notag\\
		&\label{eq: hat_sf}\widehat{S.f^\BS_t}(u) = \E\left[S_t\cdot f(t,S^{\BS}_t)\Big|\F_u\right]\\
		&= S_u\e^{(\mu + \lambda\epsilon_1 - \frac{\sigma^2}{2}) (t-u)}
		\int_{-\infty}^\infty\e^{\sigma z}f\Big(t,S^\BS_u\e^{(\mu - \frac{\sigma^2}{2})(t-u) + \sigma z}\Big)\phi(z;0;t-u)\d z\notag,
	\end{align}
	where $\phi(z;a;b^2)$ is the normal density function of mean $a$ and variance $b^2$, and $G^{*n}$ is the $n$th fold convolution of the distribution $\zeta_k=\log(1+\xi_k)\sim G$.
\end{lmm}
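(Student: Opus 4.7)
The plan is to reuse the two structural ingredients that drive Lemma~\ref{lmm: Con_Moments}: the Gaussian representation $S^\BS_t = S^\BS_u\,\e^{(\mu-\sigma^2/2)(t-u)+\sigma(W_t-W_u)}$ coming from \eqref{eq: S_BS}, and the factorisation $S_t = S_u\cdot(S^\BS_t/S^\BS_u)\cdot\prod_{j=N_u+1}^{N_t}(1+\xi_j)$ that follows from \eqref{eq: JD_BS}. In both expressions the $\F_u$-measurable part is split cleanly from the post-$u$ randomness, namely the Brownian increment $W_t-W_u\sim N(0,t-u)$ and the post-$u$ compound-Poisson piece $\{(\xi_j)_{j>N_u},\, N_t-N_u\}$, which are independent of $\F_u$ and of each other.

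For \eqref{eq: hat_f} I would simply plug the Gaussian representation into $f(t,S^\BS_t)$. Conditionally on $\F_u$ this becomes a deterministic function of $W_t-W_u$, and the conditional expectation is, by Fubini, the stated integral against $\phi(z;0;t-u)$ in the variable $z = W_t-W_u$. No jump contribution enters, because $S^\BS$ is purely Brownian.

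For \eqref{eq: hat_sf} I would combine both representations to write
\begin{equation*}
S_t\,f(t,S^\BS_t)=S_u\,\e^{(\mu-\frac{\sigma^2}{2})(t-u)+\sigma(W_t-W_u)}\,f\!\left(t,S^\BS_u\,\e^{(\mu-\frac{\sigma^2}{2})(t-u)+\sigma(W_t-W_u)}\right)\!\prod_{j=N_u+1}^{N_t}(1+\xi_j).
\end{equation*}
Conditionally on $\F_u$ the jump product is independent of the Brownian factor, so the conditional expectation factorises into two pieces. The Brownian piece, handled again by the change of variable $z = W_t-W_u$, yields $\e^{(\mu-\sigma^2/2)(t-u)}\int\e^{\sigma z}f(\cdots)\phi(z;0;t-u)\,\d z$. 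The compound-Poisson piece reduces to $\E\!\left[\prod_{j=1}^{N_{t-u}}(1+\xi_j)\right]$, which by conditioning on $N_{t-u}$ and summing the Poisson series exactly as in the proof of Lemma~\ref{lmm: Con_Moments} equals $\e^{\lambda\epsilon_1(t-u)}$. Multiplying the two factors produces the advertised closed form.

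The only mildly technical point is the filtration bookkeeping: one must confirm that $\F_u=\sigma(S_s:s\le u)$ genuinely disentangles $W_u$, $N_u$ and $(\xi_j)_{j\le N_u}$ from their post-$u$ analogues, so that the residual Brownian and compound-Poisson pieces are jointly independent of $\F_u$. This is standard for jump--diffusion Lévy models and is already invoked implicitly in Lemma~\ref{lmm: Con_Moments}. Beyond this, the computation is essentially the Gaussian integration and Poisson series manipulation already carried out there, now carrying the extra test function $f$ through the Gaussian integral; I do not anticipate any genuine obstacle.
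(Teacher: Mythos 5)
Your proposal is correct and follows essentially the same route as the paper's proof: condition on $\F_u$, use the representation $S^\BS_t = S^\BS_u\,\e^{(\mu-\frac{\sigma^2}{2})(t-u)+\sigma(W_t-W_u)}$ together with the factorisation of $S_t$ into its $\F_u$-measurable part, the independent Brownian increment, and the independent post-$u$ jump product, then split the conditional expectation and evaluate the jump factor as $\e^{\lambda\epsilon_1(t-u)}$ and the Brownian factor as the Gaussian integral in $z=W_t-W_u$. The only differences are cosmetic (you make the Wald/Poisson-series step and the filtration bookkeeping explicit, which the paper partly absorbs into Lemma~\ref{lmm: Con_Moments}).
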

\begin{proof}
	We have
	\begin{align}\label{eq:prf2_u}
		\widehat{f^\BS_t}(u) &= \E\left[f(t,S^\BS_t)\Big|\F_u\right]\notag\\
		&= \E\left[f\Big(t,S^\BS_u\,\e^{(\mu - \frac{\sigma^2}{2})(t-u)\, + \,\sigma (W_t-W_u)}\Big)\Big|\F_u^{W,J}\right]\notag\\
		&= \E\left[f\Big(t,S^\BS_u\,\e^{(\mu - \frac{\sigma^2}{2})(t-u)\, + \,\sigma (W_{t-u})}\Big)\Big|\F_u^W\right].
	\end{align}
	Then we note $S^\BS_u$ and $W_{t-u}\deq W_t-W_u$ are respectively measurable and independent of $\F_u^W$, the expectation is indeed just with respect to the variable $W_{t-u}$. So, using the density function of $W_{t-u}$ which is $\phi(z;0;t-u)$, we can continue \eqref{eq:prf2_u} as
	\begin{equation*}
		= \int_{-\infty}^\infty f\Big(t,S^\BS_u\e^{(\mu - \frac{\sigma^2}{2})(t-u) + \sigma z}\Big)\phi(z;0;t-u)\d z.
	\end{equation*}
	Next, we note $W,N,\xi_k$ are all independent, $W_{t-u}\deq W_t-W_u$ and $N_{t-u}\deq N_t-N_u$ are both independent from $\F_u=\F_u^{W,N,\xi}$, and also $S_u$ and $S^\BS_u$ both are $\F_u$-measurable. So,
		\begin{align*}
		&\widehat{S.f^\BS_t}(u) = \E\left[S_t\cdot f(t,S^{\BS}_t)\Big|\F_u\right]\notag\\
		&= \E\Bigg[S_u\e^{(\mu - \frac{\sigma^2}{2})(t-u) + \sigma (W_t-W_u)}\prod_{j=N_u}^{N_t}(1+\xi_j)\\
		&\times f\Big(t,S^\BS_u\e^{(\mu - \frac{\sigma^2}{2})(t-u) + \sigma (W_t-W_u)}\Big)\Big|\F_u^{W,N,\xi}\Bigg]\notag
		\end{align*}
		\begin{align*}
		&= S_u\e^{(\mu - \frac{\sigma^2}{2})(t-u)}
		\E\left[\prod_{j=1}^{N_{t-u}}(1+\xi_j)\Big|\F_u^{N,\xi}\right]\\
		&\times\E\left[\e^{\sigma W_{t-u}}f\Big(t,S^\BS_u\e^{(\mu - \frac{\sigma^2}{2})(t-u)+\sigma W_{t-u}}\Big)\Big|\F_u^W\right]\\
		&= S_u\e^{(\mu + \lambda\epsilon_1 - \frac{\sigma^2}{2})(t-u)}
		\E\left[\e^{\sigma W_{t-u}}f\Big(t,S^\BS_u\e^{(\mu - \frac{\sigma^2}{2})(t-u)+\sigma W_{t-u}}\Big)\Big|\F_u^W\right],
	\end{align*}
	and again the expectation is just with respect to the variable $W_{t-u}$. This proves \eqref{eq: hat_sf}.
\end{proof}
\section{Conditional Mean Strategies}\label{sect: CMH}
From now on, we consider $\Delta$ as the backward difference operator i.e. $\Delta t_{i+1} = t_{i+1}-t_i$ and for function $f$ it is $\Delta f_{t_{i+1}} = f_{t_{i+1}}-f_{t_i}$.
\begin{lmm}[Conditional Gains]\label{lmm: Gains}
	Let $S$ be left continuous at $\{t_i\}_{i\ge 0}$, i.e. no jump happens right before the transaction payment time points, then
	\begin{align}
		&\hat S_{t_{i+1}}(t_i) =  S_{t_i}\exp\Big[(\mu +\lambda\epsilon_1)\Delta t_{i+1}\Big],\label{eq: hat_Si}\\
		&\widehat{S^2}_{t_{i+1}}(t_i) = S^2_{t_i}\,\exp\left[\Big(2\mu + \sigma^2 + 2\lambda\epsilon_1 + \lambda\epsilon_2\Big)\Delta t_{i+1}\right],\label{eq: hat_S2i}\\
		&\widehat{V}^\pi_{t_{i+1}}(t_i) = \E\left[g(t_{i+1},S^\BS_{t_{i+1}})\Big|\F_{t_i}\right]\label{eq: hat_Vi}\\
		=&
		\int_{-\infty}^\infty g\Big(t_{i+1},S^\BS_{t_i}\e^{(\mu - \frac{\sigma^2}{2})\Delta t_{i+1} + \sigma z}\Big)\phi(z;0;\Delta t_{i+1})\d z\notag\\
		&\widehat{S.V^\pi}_{t_{i+1}}(t_i) = \E\left[S_{t_{i+1}}g(t_{i+1},S^\BS_{t_{i+1}})\Big|\F_{t_i}\right]\label{eq: hat_SVi}\\
		=&\, S_{t_i}\e^{(\mu + \lambda\epsilon_1 - \frac{\sigma^2}{2})\Delta t_{i+1}}
		\int_{-\infty}^\infty\e^{\sigma z}g\Big(t_{i+1},S^\BS_{t_i}\e^{(\mu - \frac{\sigma^2}{2})\Delta t_{i+1} + \sigma z}\Big)\phi(z;0;\Delta t_{i+1})\d z\notag\\
		&V^{\pi^N,\kappa}_{t_{i+1}} = V^{\pi^N,\kappa}_{t_i} + \pi^N_{t_i}\Delta S_{t_{i+1}} - \kappa S_{t_{i+1}}|\Delta\pi^N_{t_{i+1}}|,\label{eq: VNi}\\
		&\hat{V}^{\pi^N,\kappa}_{t_{i+1}}(t_i) = {V}^{\pi^N,\kappa}_{t_i} + \pi^N_{t_i}\Delta\hat S_{t_{i+1}}(t_i) - \kappa \hat S_{t_{i+1}}(t_i)|\Delta\pi^N_{t_{i+1}}|,\label{eq: hat_VNi}
	\end{align}
	where $\Delta\hat S_{t_{i+1}}(t_i) = \hat S_{t_{i+1}}(t_i) - S_{t_i}$.
\end{lmm}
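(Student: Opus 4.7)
The plan is to dispatch each of the six formulas by invoking either Lemma \ref{lmm: Con_Moments} or Lemma \ref{lmm: Con_f} with $u=t_i$, $t=t_{i+1}$, or by direct bookkeeping from the self-financing definition together with the left-continuity assumption.

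First, \eqref{eq: hat_Si} and \eqref{eq: hat_S2i} are just \eqref{eq: hat_S} and \eqref{eq: hat_S2} of Lemma \ref{lmm: Con_Moments} specialised to $u=t_i$, $t=t_{i+1}$, so that $t-u=\Delta t_{i+1}$. Next, \eqref{eq: hat_Vi} and \eqref{eq: hat_SVi} follow from Lemma \ref{lmm: Con_f} applied to $f=g$, where $g$ is the Black--Scholes pricing function of hypothesis (iv), so that $V^\pi_t = g(t,S^\BS_t)$; again, the substitution $u=t_i$, $t=t_{i+1}$ gives the stated integrals against the Gaussian density $\phi(\cdot;0;\Delta t_{i+1})$.

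For \eqref{eq: VNi}, I would unroll the integral definition of $V^{\pi^N,\kappa}$ on $[t_i,t_{i+1}]$. Since $\pi^N_u = \pi^N_{t_i}$ on $[t_i,t_{i+1})$, the gain piece $\int_{t_i}^{t_{i+1}}\pi^N_u\,\d S_u$ collapses to $\pi^N_{t_i}\,\Delta S_{t_{i+1}}$. The cost piece $\int\kappa S_u|\d\pi^N_u|$ is a sum of Dirac contributions at the rebalance times; the only one falling inside $(t_i,t_{i+1}]$ is the jump at $t_{i+1}$, contributing $\kappa S_{t_{i+1}}|\Delta\pi^N_{t_{i+1}}|$. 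The left-continuity assumption $S_{t_{i+1}^-}=S_{t_{i+1}}$ is precisely what identifies the transaction price unambiguously with $S_{t_{i+1}}$.

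Finally, \eqref{eq: hat_VNi} is obtained by conditioning \eqref{eq: VNi} on $\F_{t_i}$. The terms $V^{\pi^N,\kappa}_{t_i}$ and $\pi^N_{t_i}$ are $\F_{t_i}$-measurable and factor out, while the middle term becomes $\pi^N_{t_i}\Delta\hat S_{t_{i+1}}(t_i)$ by \eqref{eq: hat_Si} and linearity. For the cost term, one uses that in the recursive CMH/CLH construction $\pi^N_{t_{i+1}}$ is treated as an $\F_{t_i}$-measurable decision (the holding chosen at step $i$ for use from $t_{i+1}$ onward), so $|\Delta\pi^N_{t_{i+1}}|$ factors out of the conditional expectation, leaving $\kappa\,\hat S_{t_{i+1}}(t_i)\,|\Delta\pi^N_{t_{i+1}}|$. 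The main --- and essentially only --- subtlety is this last measurability point: were $\pi^N_{t_{i+1}}$ allowed to depend on information revealed strictly after $t_i$, the conditional expectation of the cost term would not factor and the clean one-step recursion \eqref{eq: hat_VNi} would fail. Everything else is a routine combination of the two preceding lemmas with a pathwise integration.
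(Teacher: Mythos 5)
Your proof is correct and follows essentially the same route as the paper's: the first four identities are read off from Lemmas \ref{lmm: Con_Moments} and \ref{lmm: Con_f} with $u=t_i$, $t=t_{i+1}$; \eqref{eq: VNi} comes from unrolling the self-financing value over $[t_i,t_{i+1}]$ (the paper writes the cost term as a Riemann--Stieltjes partition limit in which left continuity at $t_{i+1}$ produces the factor $S_{t_{i+1}}$ --- your Dirac-atom argument in different words); and \eqref{eq: hat_VNi} is obtained by conditioning on $\F_{t_i}$. Your explicit observation that $\pi^N_{t_{i+1}}$ must be treated as an $\F_{t_i}$-measurable decision for $|\Delta\pi^N_{t_{i+1}}|$ to factor out of the conditional expectation is left implicit in the paper, but it is exactly how the recursive CMH/CLH formulas subsequently use the lemma, so your reading is consistent with the intended setting.
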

\begin{proof}
	\eqref{eq: hat_Si} and \eqref{eq: hat_S2i} are straight results of Lemma \ref{lmm: Con_Moments}, and formulas \eqref{eq: hat_Vi} and \eqref{eq: hat_SVi} are the results of Lemma \ref{lmm: Con_f}. To prove \eqref{eq: VNi}, we note that from \eqref{eq:kappa}
	\begin{align}
		V^{\pi^N,\kappa}_{t_{i+1}} &= V^{\pi^N,\kappa}_{t_i} + \int_{t_i}^{t_{i+1}} \pi^N_u \, \d S_u - \int_{t_i}^{t_{i+1}} \kappa S_u |\d \pi^N_u|\notag\\
		&= V^{\pi^N,\kappa}_{t_i} + \pi^N_{t_i} (S_{t_{i+1}}-S_{t_i}) - \kappa \lim_{|\Lambda_n|\to 0}\sum_{j=0}^{n-1}S_{u^*_j} |\Delta \pi^N_{u_{j+1}}|\label{eq: V_lim},
	\end{align}
where $\Lambda_n: t_i=u_0, u_1, \ldots, u_{n-1}, u_n=t_{i+1}$ is a partition of $n+1$ time points on $[t_i,t_{i+1}]$, that $u^*_j\in [u_j,u_{j+1}]$, and $|\Lambda_n|=\max_{0\le j\le n-1}|\Delta u_j|$. Now, as the asset price $S$ is left continuous at $\{t_i\}_{i\ge 0}$, we can continue \eqref{eq: V_lim} as folloing
\begin{equation*}
	= V^{\pi^N,\kappa}_{t_i} + \pi^N_{t_i} (S_{t_{i+1}}-S_{t_i}) - \kappa S_{t_{i+1}} |\Delta \pi^N_{t_{i+1}}|.
\end{equation*}
Taking the conditional expectation $\E[\,\cdot\,|\F_{t_i}]$ to the right side of the final equality above, we have \eqref{eq: hat_VNi}.
\end{proof}
\begin{thm}[Conditional Mean Hedging]\label{thm: CMH}
	If the asset price $S$ is left continuous, $\pi^N$ is the CMH strategy for the European option of type $f(S_T)$ with convex or concave positive payoff function $f$, and transaction costs proportion $\kappa$, if and only if
	\begin{equation}\label{eq: recursive}
		|\pi^N_{t_{i+1}}-\pi^N_{t_i}|=\frac{V^{\pi^N,\kappa}_{t_i} - \hat{V}^\pi_{t_{i+1}}(t_i) + \pi^N_{t_i}\Delta\hat S_{t_{i+1}}(t_i)}{\kappa\hat S_{t_{i+1}}(t_i)},
	\end{equation}
	for all $i=0,\ldots,n-1$.
\end{thm}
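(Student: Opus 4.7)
The plan is to treat the theorem as a direct algebraic reformulation of the CMH condition \eqref{eq:cm-hedge}, with the substantive content supplied entirely by Lemma~\ref{lmm: Gains}. At step $i$, condition \eqref{eq:cm-hedge} reads
\[
\hat V^{\pi^N,\kappa}_{t_{i+1}}(t_i) \;=\; \hat V^\pi_{t_{i+1}}(t_i),
\]
where both sides are $\F_{t_i}$-measurable random variables at the observed state $S_{t_i}$. Note that the left-continuity hypothesis on $S$ is already folded into the closed-form recursion \eqref{eq: hat_VNi}, so it does not need to be reinvoked here.

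Substituting \eqref{eq: hat_VNi} into the left-hand side, and using that $V^{\pi^N,\kappa}_{t_i}$, $\pi^N_{t_i}$ and $S_{t_i}$ are $\F_{t_i}$-measurable while $|\Delta \pi^N_{t_{i+1}}|$ is the forward hedging decision, the identity collapses to the scalar equation
\[
V^{\pi^N,\kappa}_{t_i} + \pi^N_{t_i}\,\Delta \hat S_{t_{i+1}}(t_i) \;-\; \kappa\, \hat S_{t_{i+1}}(t_i)\,|\Delta \pi^N_{t_{i+1}}| \;=\; \hat V^\pi_{t_{i+1}}(t_i).
\]
Isolating $|\Delta \pi^N_{t_{i+1}}|$ and dividing by $\kappa\, \hat S_{t_{i+1}}(t_i)>0$ gives precisely \eqref{eq: recursive}, and the ``if'' direction is the same chain of equalities read in reverse, requiring no further ingredient.

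The only genuinely subtle point, and the one I expect to be the main obstacle, is the sign condition implicit in the absolute value: for \eqref{eq: recursive} to correspond to an admissible $\pi^N_{t_{i+1}}$ at all, the numerator on its right-hand side must be non-negative. This is where positivity and convexity (or concavity) of the payoff $f$ are meant to enter, presumably through a Jensen-type estimate applied to $\hat V^\pi_{t_{i+1}}(t_i) = \E[g(t_{i+1},S^\BS_{t_{i+1}})\mid \F_{t_i}]$ together with the Black--Scholes PDE \eqref{eq:BS} that $g$ satisfies. The goal is to establish
\[
\hat V^\pi_{t_{i+1}}(t_i) \;\le\; V^{\pi^N,\kappa}_{t_i} + \pi^N_{t_i}\,\Delta \hat S_{t_{i+1}}(t_i),
\]
so that the modulus in \eqref{eq: recursive} is meaningful. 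Once this non-negativity is in hand, \eqref{eq: recursive} determines $|\Delta \pi^N_{t_{i+1}}|$ uniquely while leaving a binary sign choice for the direction of the trade, consistent with the structure of \eqref{eq:cm-hedge}.
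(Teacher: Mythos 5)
Your proposal's core coincides with the paper's proof: one simply equates the conditional expectations \eqref{eq: hat_Vi} and \eqref{eq: hat_VNi} from Lemma~\ref{lmm: Gains} (which is where the left-continuity of $S$ is used) and solves for $|\pi^N_{t_{i+1}}-\pi^N_{t_i}|$ to obtain \eqref{eq: recursive}. The sign condition you single out as the main obstacle is not addressed in the paper's proof either --- no Jensen-type estimate from positivity or convexity of $f$ is attempted; instead, the Interpretations remark opening Section~\ref{sect: CLH} concedes that the numerator in \eqref{eq: recursive} can be negative, in which case the CMH strategy simply fails to exist, and this limitation is precisely the paper's motivation for introducing the CLH method.
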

\begin{proof}
	The definition \ref{dfn: CMH} and equation \eqref{eq:cm-hedge}, are equivalent to have \eqref{eq: hat_Vi} and \eqref{eq: hat_VNi} equal. This returns the equation \eqref{eq: recursive}.
\end{proof}
\begin{rem}
	Taking a long position (buying) of $S$ for the period $[t_i,t_{i+1})$ means $\Delta\pi^N_{t_i}>0$, and so from \eqref{eq: recursive}
	\begin{equation}\label{eq: Long}
		\pi^N_{t_{i+1}} = \pi^N_{t_i}\, +\, \frac{V^{\pi^N,\kappa}_{t_i} - \hat{V}^\pi_{t_{i+1}}(t_i) + \pi^N_{t_i}\Delta\hat S_{t_{i+1}}(t_i)}{\kappa\hat S_{t_{i+1}}(t_i)}.
	\end{equation}
	On the other hand, a short position (selling) of $S$ for the period $[t_i,t_{i+1})$ means $\Delta\pi^N_{t_i}<0$, hence from \eqref{eq: recursive}
	\begin{equation}\label{eq: Short}
		\pi^N_{t_{i+1}} = \pi^N_{t_i}\, -\, \frac{V^{\pi^N,\kappa}_{t_i} - \hat{V}^\pi_{t_{i+1}}(t_i) + \pi^N_{t_i}\Delta\hat S_{t_{i+1}}(t_i)}{\kappa\hat S_{t_{i+1}}(t_i)}.
	\end{equation}
	These cause a binary decision tree of strategies that a trader can take on transaction time intervals as follows.
	\begin{figure}[H]
		\begin{forest}
			for tree={
				calign=center,
				grow'=east, 
				text height=1.4ex, text depth=0.2ex, 
				l sep+=4em,
				inner sep=0.25em
			}
			[$\pi_{t_0}^{N}$
			[$\pi_{t_1}^{N,1}$,edge label={node[midway,above,sloped,font=\it\small]{long}}
			[$\pi_{t_2}^{N,1}$,edge label={node[midway,above,sloped,font=\it\small]{long}}
			[$\pi_{t_3}^{N,1}$
			[$\pi_{t_{N-1}}^{N,1}$, edge=dashed][,no edge]]
			[$\pi_{t_3}^{N,2}$[{\LARGE :},no edge]]]
			[$\pi_{t_2}^{N,2}$,edge label={node[midway,below,sloped,font=\it\small]{short}}
			[$\pi_{t_3}^{N,3}$[{\LARGE :},no edge]]
			[$\pi_{t_3}^{N,4}$[{\LARGE :},no edge]]]]
			[$\pi_{t_1}^{N,2}$,edge label={node[midway,below,sloped,font=\it\small]{short}}
			[$\pi_{t_2}^{N,3}$,edge label={node[midway,above,sloped,font=\it\small]{long}}
			[$\pi_{t_3}^{N,5}$[{\LARGE :},no edge]]
			[$\pi_{t_3}^{N,6}$[{\LARGE :},no edge]]]
			[$\pi_{t_2}^{N,4}$,edge label={node[midway,below,sloped,font=\it\small]{short}}
			[$\pi_{t_3}^{N,7}$[{\LARGE :},no edge]]
			[$\pi_{t_3}^{N,8}$ [,no edge]
			[$\pi_{t_{N-1}}^{N,2^{N-1}}$, edge=dashed]]
			]]]
		\end{forest}
		\caption{The decision tree of CMH method.}
	\end{figure}
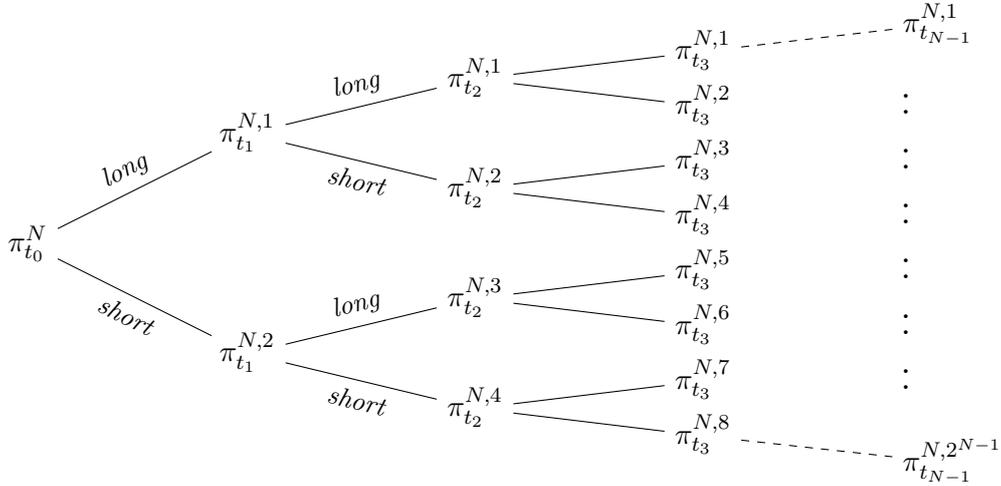
\end{rem}
\section{Conditional Least-Square Strategies}\label{sect: CLH}
\begin{rem}[Interpretations]
	Although Theorem \ref{thm: CMH} results in an explicit form for the CMH strategy, there are some considerable points in it:
	
	1. Denote
	\begin{align}
		\text{(Empirical)}\quad&\theta^N_{t_i} = V^{\pi^N,\kappa}_{t_i} + \pi_{t_i}^N \Delta\hat S_{t_{i+1}}(t_i),\label{eq: theta_N}\\
		\text{(Theoretical)}\quad&\theta^\pi_{t_i} = \hat V^\pi_{t_{i+1}}(t_i).\label{eq: theta_p}
	\end{align}
	Theorem \ref{thm: CMH} shows the CMH strategy is plausible only if $\theta^N_{t_i} - \theta^\pi_{t_i}\ge 0$, i.e., the empirical strategy's values always exceed the theoretical values for $i=0,\dots,n-1$ all ($\theta^N_{t_i} \ge \theta^\pi_{t_i}$). However, this theorem vanishes if just for some $0\le j\le n-1$ we happen to $\theta^N_{t_j} < \theta^\pi_{t_j}$.
	
	2. On the other hand, even if we successfully extract the CMH strategy, that is matching the conditional means of the variables $V^{\pi^N,\kappa}_{t_{i+1}}$ and $V^{\pi}_{t_{i+1}}$. However, this strategy does not guarantee the minimum conditional distance of these variables necessarily.
	
	3. Since under the transactional costs the perfect hedging is not possible. However, we note that applying the JD model \eqref{eq:JD} for the risky asset price, is indeed a generalization of the gBm model \eqref{eq:gBm} by adding a jump to it. So, for a trader in JD incomplete market, it is natural to investigate the closest discrete strategy to the perfect hedging strategy in the background gBm complete market, i.e., the BS strategy. Now, if one consider the difference (distance) as the conditional least-square norm, then the mathematical formulation of this methodology for hedging forms the following definition as a modified and alternative framework instead of the CMH.
\end{rem}
\begin{dfn}[Conditional Least Square Hedging]\label{dfn: CLH}
	Let $f(S_T)$ be a financial derivative with convex or concave payoff $f$.
	Let $\pi$ be its Black--Scholes strategy.
	We call the discrete-time strategy $^*\pi^N$ is a \textit{conditional least square hedging} (CLH) strategy, if for all trading times $t_i$,
	\begin{equation}\label{eq:cl-hedge}
		^*\pi^N_{t_{i+1}}\in\underset{\pi^N_{t_{i+1}}\in\R}{\mathrm{argmin}}\;\E\left[\left( V^{\pi^N,\kappa}_{t_{i+1}} - V^{\pi}_{t_{i+1}}\right)^2\,\Big|\, \F_{t_i}\right].
	\end{equation}
	Here $\F_{t_i}$ is the information filter, generated by the asset price $S$ upto time $t_i$.
\end{dfn}
\begin{lmm}\label{lmm: opt}
	Consider the problem
	\begin{equation}\label{eq: opt}
		\min_{x\in\R}\,f(x) = a(x-x_0)^2 + b|x-x_0| + c,
	\end{equation}
	where $a> 0$, $b,x_0\in\R$, and $c\ge 0$.\\
	i) If $b\ge 0$, it has the unique solution $x^* = x_0$ with $f(x^*)=c$,\\
	ii) If $b<0$, it has the two solutions $x^* = x_0\pm b/2a$ with $f(x^*)=0$.
\end{lmm}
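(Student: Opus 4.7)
My plan is to reduce this one-variable problem involving $|x-x_0|$ to a one-dimensional convex program on a half-line. I would introduce $y := |x - x_0| \ge 0$ and observe that the map $x \mapsto |x-x_0|$ is surjective onto $[0,\infty)$ (two-to-one for $y>0$, with $x=x_0$ as the unique preimage of $y=0$), so
\begin{equation*}
\min_{x\in\R} f(x) \;=\; \min_{y\ge 0} g(y), \qquad g(y) := ay^2 + by + c.
\end{equation*}
Since $a>0$, the function $g$ is a strictly convex parabola with unconstrained vertex at $y_v = -b/(2a)$, and both conclusions of the lemma then follow from inspecting the position of $y_v$ relative to the feasible half-line.

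In case (i), $b \ge 0$ places $y_v$ in $(-\infty, 0]$, so $g'(y) = 2ay + b \ge 0$ throughout $[0,\infty)$. The constrained minimum is therefore uniquely attained at the boundary $y = 0$, whose sole preimage is $x^* = x_0$, and the optimal value is $g(0)=c$. In case (ii), $b < 0$ makes $y_v = -b/(2a)$ strictly positive, so the vertex lies inside the feasible region and the constrained minimum coincides with the unconstrained one. Solving $|x^*-x_0| = -b/(2a)$ then returns the two symmetric optimisers $x^* = x_0 \pm b/(2a)$ recorded in the statement.

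There is no genuine mathematical difficulty here — the argument is a direct application of convex optimisation of a parabola on a half-line. The one point that will require attention while writing this up is the reported value of $f$ at the minimiser in case (ii): a direct evaluation yields $g(y_v) = c - b^2/(4a)$ rather than the $0$ printed in the statement, so I would either amend the claim to $c - b^2/(4a)$ or verify from the intended application (where the lemma is invoked to minimise the conditional squared hedging error in Definition \ref{dfn: CLH}) that the constant $c$ coming in already equals $b^2/(4a)$, which would account for the claimed vanishing.
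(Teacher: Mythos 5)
Your argument is correct, and it is essentially the paper's argument in a slightly different wrapper: the paper analyses $f$ piecewise in $x$ (splitting at $x_0$, computing $f'$ on each side and using $f''\equiv 2a>0$), whereas you substitute $y=|x-x_0|$ and minimise the parabola $g(y)=ay^2+by+c$ on the half-line $y\ge 0$; the two routes identify the same minimisers by the same convexity reasoning, and your substitution is marginally cleaner since it handles the two-to-one folding of $x\mapsto|x-x_0|$ in one stroke.

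The point you flag about case (ii) is genuine and worth stating plainly: the value of the minimum is $f(x^*)=c-b^2/(4a)$, not $0$, and the paper's own proof simply asserts $f(x^*)=0$ without evaluating $f$ at $x^*=x_0\pm b/(2a)$. In the application (the proof of Theorem \ref{thm: CLH}) one has $a_i=\kappa^2\widehat{S^2}_{t_{i+1}}(t_i)$, $b_i=-2\kappa U_{t_i}$ and $c_i=\E[(V^{\pi^N,\kappa}_{t_i}-V^{\pi}_{t_{i+1}}+\pi^N_{t_i}\Delta S_{t_{i+1}})^2\,|\,\F_{t_i}]$, and by the conditional Cauchy--Schwarz inequality $c_i\ge b_i^2/(4a_i)$ with equality only in degenerate cases, so the true minimal value is $c_i-b_i^2/(4a_i)\ge 0$ and generally positive. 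Fortunately Theorem \ref{thm: CLH} uses only the \emph{locations} of the minimisers, which both you and the paper get right, so the erroneous value $f(x^*)=0$ does not propagate; your suggestion to amend the claim to $c-b^2/(4a)$ is the correct fix.
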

\begin{proof}
	First, if $b\ge 0$ then $f$ is a convex function which has its unique minimum on $x^*=x_0$ with $f(x^*)=c$. Second, if $b<0$, we have
	\begin{equation}
		f(x) =
		\left\{
		\begin{array}{lll}
			a(x-x_0)^2 - b(x-x_0) + c &;&x<x_0\\
			c &;&x=x_0\\
			a(x-x_0)^2 + b(x-x_0) + c &;&x>x_0
		\end{array}
		\right.
	\end{equation}
	and so
	\begin{equation}
		f'(x) =
		\left\{
		\begin{array}{lll}
			2a(x-x_0) - b &;&x<x_0\\
			2a(x-x_0) + b &;&x>x_0
		\end{array}
		\right.
	\end{equation}
	and $f''(x)\equiv 2a>0$ on all $x\in\R$. So, $f$ has minimums around the roots of its derivative $f'$, which are $x^*=x_0\pm b/2a$ with values $f(x^*)=0$.
\end{proof}
\begin{thm}[CLH Strategy]\label{thm: CLH}
	Denote
	\begin{equation}\label{eq: test_U}
		U_{t_i} = \left(V^{\pi^N,\kappa}_{t_i} - \pi_{t_i}^N S_{t_i}\right)\hat S_{t_{i+1}}(t_i) - \widehat{S.V^{\pi}}_{t_{i+1}}(t_i)
		+\pi_{t_i}^N\widehat{S^2}_{t_{i+1}}(t_i).
	\end{equation}
	If the asset price $S$ is left continuous, for the European  option of the type $f(S_T)$ with convex or concave positive payoff function $f$, and proportion of transaction costs $\kappa$ the CLH strategy admits the following recursive equations:\\
	(I) If $U_{t_i}\le 0$, then $^*\pi^N_{t_{i+1}}
	= \pi^N_{t_i}$,\\
	(II) If $U_{t_i}> 0$, then for a long position on $t_{i+1}$
	\begin{align}\label{eq: long}
		^*\pi^N_{t_{i+1}}(long) &= \pi^N_{t_i} + \frac{U_{t_i}}{\kappa\widehat{S^2}_{t_{i+1}}(t_i)}\\
		&= \pi^N_{t_i} + \frac{1}{\kappa}\left[\pi^N_{t_i} - \frac{\widehat{\,S.V^\pi \,}_{t_{i+1}}(t_i) - (V^{\pi^N,\kappa}_{t_i}-\pi^N_{t_i}S_{t_i})\hat S_{t_{i+1}}(t_i)}{\widehat{S^2}_{t_{i+1}}(t_i)}\right],\notag
	\end{align}
    and for a short position on $t_{i+1}$
	\begin{align}\label{eq: short}
		^*\pi^N_{t_{i+1}}(short) &= \pi^N_{t_i} - \frac{U_{t_i}}{\kappa\widehat{S^2}_{t_{i+1}}(t_i)}\\	
		&= \pi^N_{t_i} - \frac{1}{\kappa}\left[\pi^N_{t_i} - \frac{\widehat{\,S.V^\pi \,}_{t_{i+1}}(t_i) - (V^{\pi^N,\kappa}_{t_i}-\pi^N_{t_i}S_{t_i})\hat S_{t_{i+1}}(t_i)}{\widehat{S^2}_{t_{i+1}}(t_i)}\right].\notag
	\end{align}
\end{thm}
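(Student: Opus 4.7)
The plan is to reduce the pointwise minimisation in \eqref{eq:cl-hedge} to the scalar optimisation problem of Lemma~\ref{lmm: opt}, viewing the objective as a quadratic-plus-absolute-value function of the decision variable $x := \pi^N_{t_{i+1}}$ pivoted at $x_0 := \pi^N_{t_i}$.

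First, using left continuity of $S$ at $\{t_i\}$ together with \eqref{eq: VNi}, I would write
\[
V^{\pi^N,\kappa}_{t_{i+1}} - V^{\pi}_{t_{i+1}} \;=\; A \;-\; \kappa\, S_{t_{i+1}}\,|\Delta \pi^N_{t_{i+1}}|,
\]
with
\[
A \;:=\; \bigl(V^{\pi^N,\kappa}_{t_i} - \pi^N_{t_i} S_{t_i}\bigr) + \pi^N_{t_i}\, S_{t_{i+1}} - V^{\pi}_{t_{i+1}},
\]
so that every appearance of the unknown $\pi^N_{t_{i+1}}$ is confined to the last summand. Squaring and conditioning on $\F_{t_i}$ (noting that for a fixed candidate $x\in\R$ the increment $\Delta\pi^N_{t_{i+1}}=x-x_0$ is deterministic) yields
\[
\E\bigl[(V^{\pi^N,\kappa}_{t_{i+1}} - V^{\pi}_{t_{i+1}})^2\bigm|\F_{t_i}\bigr] \;=\; \kappa^2\,\widehat{S^2}_{t_{i+1}}(t_i)\,(x-x_0)^2 - 2\kappa\,|x-x_0|\,\E[A\, S_{t_{i+1}}\mid \F_{t_i}] + \E[A^2\mid \F_{t_i}],
\]
which matches the template $a(x-x_0)^2+b|x-x_0|+c$ of Lemma~\ref{lmm: opt} with $a=\kappa^2\widehat{S^2}_{t_{i+1}}(t_i)>0$, $b=-2\kappa\,\E[A\, S_{t_{i+1}}\mid \F_{t_i}]$, and $c=\E[A^2\mid \F_{t_i}]\ge 0$.

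Next I would compute the cross term $\E[A\, S_{t_{i+1}}\mid \F_{t_i}]$ by distributing $A\, S_{t_{i+1}}$ and invoking Lemma~\ref{lmm: Gains}: the $\F_{t_i}$-measurable factor $V^{\pi^N,\kappa}_{t_i}-\pi^N_{t_i}S_{t_i}$ pulls out and multiplies $\hat S_{t_{i+1}}(t_i)$ by \eqref{eq: hat_Si}; the quadratic piece $\pi^N_{t_i} S_{t_{i+1}}^2$ contributes $\pi^N_{t_i}\,\widehat{S^2}_{t_{i+1}}(t_i)$ by \eqref{eq: hat_S2i}; and since $V^\pi_{t_{i+1}}=g(t_{i+1},S^\BS_{t_{i+1}})$, the piece $-V^\pi_{t_{i+1}}\, S_{t_{i+1}}$ contributes $-\widehat{S.V^\pi}_{t_{i+1}}(t_i)$ by \eqref{eq: hat_SVi}. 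Summing gives precisely the quantity $U_{t_i}$ defined in \eqref{eq: test_U}, so that $b=-2\kappa\,U_{t_i}$.

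Finally I apply Lemma~\ref{lmm: opt}. The sign condition $b\ge 0$ is equivalent to $U_{t_i}\le 0$, and case~(i) of the lemma returns the unique optimum $^*\pi^N_{t_{i+1}}=\pi^N_{t_i}$ of statement~(I). When $U_{t_i}>0$, case~(ii) produces the two optima $x_0\pm b/(2a) = \pi^N_{t_i}\mp U_{t_i}/(\kappa\widehat{S^2}_{t_{i+1}}(t_i))$; inspecting the sign of $\Delta\pi^N_{t_{i+1}}$ identifies the $+$ branch as the long position \eqref{eq: long} and the $-$ branch as the short position \eqref{eq: short}. The second, equivalent expression in each of \eqref{eq: long} and \eqref{eq: short} is then obtained by substituting \eqref{eq: test_U} for $U_{t_i}$ and regrouping the ratio as $\pi^N_{t_i}/\kappa$ plus a residual term. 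The only non-routine step will be the clean identification of the cross term with $U_{t_i}$; everything else is a quadratic expansion followed by a direct appeal to Lemma~\ref{lmm: opt} and a sign inspection for long versus short.
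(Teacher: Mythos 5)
Your proposal is correct and follows essentially the same route as the paper's own proof: write the value difference as an $\F_{t_i}$-dependent term minus $\kappa S_{t_{i+1}}|\Delta\pi^N_{t_{i+1}}|$, expand the conditional square into the template $a(x-x_0)^2+b|x-x_0|+c$, identify $b=-2\kappa U_{t_i}$ via Lemma \ref{lmm: Gains}, and conclude with Lemma \ref{lmm: opt} and a sign inspection for the long/short branches. Your explicit remark that $\Delta\pi^N_{t_{i+1}}$ is deterministic (hence pulls out of the conditional expectation) for a fixed candidate $x$ is a welcome clarification, but the argument is the same as in the paper.
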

\begin{proof}
	By Lemma \ref{lmm: Gains}
	\begin{align*}
		V^{\pi^N,\kappa}_{t_{i+1}} - V^{\pi}_{t_{i+1}}
		&= V^{\pi^N,\kappa}_{t_{i}} - V^{\pi}_{t_{i+1}}
		+\pi_{t_i}^N(S_{t_{i+1}} - S_{t_i})
		-\kappa S_{t_{i+1}}\left|\pi_{t_{i+1}}^N - \pi_{t_i}^N\right|,
	\end{align*}
	and so
	\begin{align*}
		\left(V^{\pi^N,\kappa}_{t_{i+1}} - V^{\pi}_{t_{i+1}}\right)^2
		&= \kappa^2 S^2_{t_{i+1}}\left(\pi_{t_{i+1}}^N - \pi_{t_i}^N\right)^2\\
		&-2\kappa S_{t_{i+1}}\left\{V^{\pi^N,\kappa}_{t_{i}} - V^{\pi}_{t_{i+1}}
		+\pi_{t_i}^N(S_{t_{i+1}} - S_{t_i})\right\}\left|\pi_{t_{i+1}}^N - \pi_{t_i}^N\right|\\
		&+\left(V^{\pi^N,\kappa}_{t_{i}} - V^{\pi}_{t_{i+1}}
		+\pi_{t_i}^N(S_{t_{i+1}} - S_{t_i})\right)^2,
	\end{align*}
	and taking the conditional expectation we have
		\begin{align*}
		&\E\left[\left(V^{\pi^N,\kappa}_{t_{i+1}} - V^{\pi}_{t_{i+1}}\right)^2\Big|\F_{t_i}\right]
		= \kappa^2\widehat{S^2}_{t_{i+1}}\left(\pi_{t_{i+1}}^N - \pi_{t_i}^N\right)^2\\
		&-2\kappa\left\{\left(V^{\pi^N,\kappa} - \pi_{t_i}^N S_{t_i}\right)\hat S_{t_{i+1}}(t_i) - \widehat{S.V^{\pi}}_{t_{i+1}}(t_i)
		+\pi_{t_i}^N\widehat{S^2}_{t_{i+1}}(t_i)\right\}\left|\pi_{t_{i+1}}^N - \pi_{t_i}^N\right|\\
		&+\E\left[\left(V^{\pi^N,\kappa}_{t_{i}} - V^{\pi}_{t_{i+1}}
		+\pi_{t_i}^N(S_{t_{i+1}} - S_{t_i})\right)^2\Big|\F_{t_i}\right]\\
		&= a_i\left(\pi_{t_{i+1}}^N - \pi_{t_i}^N\right)^2 + b_i\left|\pi_{t_{i+1}}^N - \pi_{t_i}^N\right| + c_i,
	\end{align*}
	and this is a function of the form \eqref{eq: opt}. So, by the Lemma \ref{lmm: opt}, if $b_i\ge 0$ ($U_i\le 0$) then it has a unique minimum at $^*\pi_{t_{i+1}}^N = \pi_{t_i}^N$. If $b_i< 0$ ($U_i> 0$) then it has twople minimums at $^*\pi_{t_{i+1}}^N = \pi_{t_i}^N \pm b_i/2a_i = \pi_{t_i}^N \mp U_i/\kappa\widehat{S^2}_{t_{i+1}}(t_i).$
\end{proof}
\begin{rem}
	Similar to the CMH method, here also we will face a decision tree of the optimal strategies of CLH method. However, the decision tree of CLH method is different. In some branches that strategy does not change, the branch continues straight but not in a binary shape. In other words, we face a decision tree similar to Figure \ref{Tree_CLH}.
		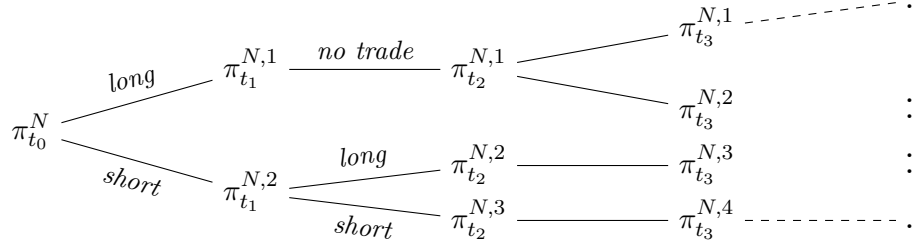
\begin{figure}[H]
		\begin{forest}
			for tree={
				calign=center,
				grow'=east, 
				text height=1.4ex, text depth=0.2ex, 
				l sep+=4em,
				inner sep=0.25em
			}
			[$\pi_{t_0}^{N}$
			[$\pi_{t_1}^{N,1}$,edge label={node[midway,above,sloped,font=\it\small]{long}}
			[$\pi_{t_2}^{N,1}$,edge label={node[midway,above,sloped,font=\it\small]{no trade}}
			[$\pi_{t_3}^{N,1}$
			[{\LARGE .}, edge=dashed][,no edge]]
			[$\pi_{t_3}^{N,2}$[{\LARGE :},no edge]]]
			]
			[$\pi_{t_1}^{N,2}$,edge label={node[midway,below,sloped,font=\it\small]{short}}
			[$\pi_{t_2}^{N,2}$,edge label={node[midway,above,sloped,font=\it\small]{long}}
			[$\pi_{t_3}^{N,3}$[{\LARGE :},no edge]]
			]
			[$\pi_{t_2}^{N,3}$,edge label={node[midway,below,sloped,font=\it\small]{short}}
			[$\pi_{t_3}^{N,4}$[{\LARGE .}, edge=dashed]]
			]]]
		\end{forest}
		\caption{The decision tree of CLH method.}\label{Tree_CLH}
	\end{figure}
\end{rem}
\begin{rem}\label{rem: CLH_test}
	Denote
	\begin{align}
		\text{(Empirical)}\quad&\ell^N_{t_i} = \left(V^{\pi^N,\kappa}_{t_i} - \pi_{t_i}^N S_{t_i}\right)\hat S_{t_{i+1}}(t_i)
		+\pi_{t_i}^N\widehat{S^2}_{t_{i+1}}(t_i),\label{eq: ell_N}\\
		\text{(Theoretical)}\quad&\ell^\pi_{t_i} = \widehat{S.V^{\pi}}_{t_{i+1}}(t_i).\label{eq: ell_p}
	\end{align}
	As $U_{t_i} = \ell^N_{t_i} - \ell^\pi_{t_i}$, further than the explicit strategies, Theorem \ref{thm: CLH} reveals  a test. In other words, it indicates that if the empirical value remains less or equal to the theoretical value ($\ell^N_{t_i}\leq\ell^\pi_{t_i}$), the trader does not need to change the volume of the strategy to stay close to the BS strategy's value. However, if the empirical value exceeds the theoretical value ($\ell^N_{t_i}>\ell^\pi_{t_i}$), in order to stay close to BS strategy's value, the trader should update the volume of the strategy according to the formulas \eqref{eq: long} and \eqref{eq: short}.
\end{rem}
\section{European Vanilla Call Option}\label{sect: EU_call}
Next, we are interested in calculating the European Call option case. To do this, first we need the following lemma. From now on, we denote the density, cumulative probability, and the expectation functions of the normal standard distribution respectively by $\varphi,\,\Phi,\,\E_z$, and the expectation function of the normal distribution $\mathcal{N}(\mu,\sigma^2)$ by $\E_z^{\mu,\sigma^2}$, i.e., for all $f:\R\to\R$
\begin{align*}
	&\varphi(z) = \frac{1}{\sqrt{2\pi}}\,\e^{-\frac{z^2}{2}},\qquad
	\Phi(z) = \int_{-\infty}^z\varphi(u)\,\d u = \frac{1}{\sqrt{2\pi}}\int_{-\infty}^z\e^{-\frac{u^2}{2}}\,\d u,\\
	&\E_z\big[f(z)\big] = \int_{-\infty}^\infty f(u)\,\varphi(u)\,\d u = \frac{1}{\sqrt{2\pi}}\int_{-\infty}^\infty f(u)\,\e^{-\frac{u^2}{2}}\,\d u,\\
	&\E_z^{\mu,\sigma^2}\big[f(z)\big] = \int_{-\infty}^\infty f(u)\,\phi(u;\mu;\sigma^2)\,\d u = \frac{1}{\sqrt{2\pi\sigma^2}}\int_{-\infty}^\infty f(u)\,\e^{-\frac{(u-\mu)^2}{2\sigma^2}}\,\d u,
\end{align*}
providing the right side integrals exist.
\begin{lmm}\label{lmm: Eint}
	For all given real values $a,b,\alpha,\mu\in\R$ and $\sigma >0$
	\begin{equation}\label{eq: int_Phi}
		\int_{-\infty}^{\infty}\e^{\alpha z}\,\Phi(az+b)\,\phi(z;\mu;\sigma^2)\,\d z\, =\, \e^{\alpha\mu + \frac{\alpha^2\sigma^2}{2}}\Phi\Bigg(\frac{a(\mu+\alpha\sigma^2) + b}{\displaystyle\sqrt{1+a^2\sigma^2}}\Bigg),
	\end{equation}
	in other word
	\begin{equation}
		\E_z^{\mu,\sigma^2}\Big[\e^{\alpha z}\,\Phi(az+b)\Big] = \e^{\alpha\mu + \frac{\alpha^2\sigma^2}{2}}\Phi\Bigg(\frac{a(\mu+\alpha\sigma^2) + b}{\displaystyle\sqrt{1+a^2\sigma^2}}\Bigg).
	\end{equation}
	In particular
	\begin{gather}
		\E_z\Big[\e^{\alpha z}\,\Phi(az+b)\Big] = \int_{-\infty}^{\infty}\e^{\alpha z}\,\Phi(az+b)\,\varphi(z)\,\d z\, =\, \e^{\frac{\alpha^2}{2}}\Phi\Bigg(\frac{a\alpha + b}{\displaystyle\sqrt{1+a^2}}\Bigg),\\
		\E_z^{\mu,\sigma^2}\Big[\Phi(az+b)\Big] = \Phi\Bigg(\frac{a\mu + b}{\displaystyle\sqrt{1+a^2\sigma^2}}\Bigg).
	\end{gather}
\end{lmm}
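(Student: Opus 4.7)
My plan is to reduce \eqref{eq: int_Phi} to a classical Gaussian-CDF identity by absorbing the factor $\e^{\alpha z}$ into the normal density via completion of the square, and then to recognize the remaining integral as the probability of a linear combination of two independent Gaussians.

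First I would combine the two exponentials in the integrand by completing the square in $z$:
\begin{align*}
\alpha z - \frac{(z-\mu)^2}{2\sigma^2}
\;=\; -\frac{\bigl(z - (\mu + \alpha\sigma^2)\bigr)^2}{2\sigma^2} \;+\; \alpha\mu + \frac{\alpha^2\sigma^2}{2}.
\end{align*}
This gives the pointwise identity
$\e^{\alpha z}\,\phi(z;\mu;\sigma^2) = \e^{\alpha\mu + \alpha^2\sigma^2/2}\,\phi(z;\mu+\alpha\sigma^2;\sigma^2)$,
so the left-hand side of \eqref{eq: int_Phi} becomes
\begin{align*}
\e^{\alpha\mu + \frac{\alpha^2\sigma^2}{2}} \int_{-\infty}^{\infty}\Phi(az+b)\,\phi(z;\mu+\alpha\sigma^2;\sigma^2)\,\d z.
\end{align*}

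Second, I would evaluate the remaining integral by a short probabilistic argument. Let $Z \sim \mathcal{N}(\mu + \alpha\sigma^2,\sigma^2)$ and $Y \sim \mathcal{N}(0,1)$ be independent. Then the integral equals
\begin{align*}
\E\bigl[\Phi(aZ+b)\bigr] \;=\; \P\bigl[Y \le aZ + b\bigr] \;=\; \P\bigl[Y - aZ \le b\bigr].
\end{align*}
Since $Y - aZ \sim \mathcal{N}\bigl(-a(\mu+\alpha\sigma^2),\,1 + a^2\sigma^2\bigr)$, this probability equals
\begin{align*}
\Phi\Bigg(\frac{a(\mu+\alpha\sigma^2) + b}{\sqrt{1 + a^2\sigma^2}}\Bigg),
\end{align*}
which combined with the exponential prefactor yields \eqref{eq: int_Phi}. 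The two particular cases listed at the end of the lemma then follow at once by specializing to $(\mu,\sigma^2)=(0,1)$ and to $\alpha = 0$.

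There is no serious obstacle here; the statement is a standard Gaussian moment-type identity, and the only place where care is needed is the bookkeeping in the completion of the square and in the variance computation for $Y - aZ$. A purely analytic alternative to the probabilistic step is to write $\Phi(az+b) = \int_{-\infty}^{b}\varphi(y - az)\,\d y$, swap the order of integration by Fubini (integrability being clear from Gaussian tails), and do the inner integral in $z$ by completing the square once more; that route produces the same answer but is more computational, which is why I would prefer the probabilistic formulation.
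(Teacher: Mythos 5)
Your proof is correct, and it takes a genuinely different route from the paper's. The paper standardizes via $u=(z-\mu)/\sigma$, completes the square so that the weight becomes $\phi(u;\alpha\sigma;1)$, rewrites $\Phi$ through the error function, and then invokes the tabulated integral $\int_{-\infty}^{\infty}\erf(Au+B)\,\phi(u;m;\nu^2)\,\d u=\erf\bigl((Am+B)/\sqrt{1+2A^2\nu^2}\bigr)$ from Ng--Geller before converting back to $\Phi$. You perform essentially the same completion of the square, but in the form $\e^{\alpha z}\,\phi(z;\mu;\sigma^2)=\e^{\alpha\mu+\alpha^2\sigma^2/2}\,\phi(z;\mu+\alpha\sigma^2;\sigma^2)$, and then replace the table lookup by the self-contained probabilistic step $\E\bigl[\Phi(aZ+b)\bigr]=\P\bigl[Y\le aZ+b\bigr]=\P\bigl[Y-aZ\le b\bigr]$ with $Y\sim\mathcal{N}(0,1)$ independent of $Z\sim\mathcal{N}(\mu+\alpha\sigma^2,\sigma^2)$, so that the denominator $\sqrt{1+a^2\sigma^2}$ appears naturally as the standard deviation of $Y-aZ$; your mean-shift and variance computations check out, and the two special cases indeed follow by setting $(\mu,\sigma^2)=(0,1)$ and $\alpha=0$. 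What your argument buys is independence from an external integral table and a transparent conceptual explanation of the resulting $\Phi$ argument; what the paper's calculation buys is a purely analytic derivation that never introduces auxiliary random variables, at the cost of citing the error-function formula. Either proof is complete and both yield exactly \eqref{eq: int_Phi}.
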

\begin{proof}
	By changing the variable $u = \frac{z-\mu}{\sigma}$ we have
	\begin{align}
		&\int_{-\infty}^{\infty}\e^{\alpha z}\,\Phi(az+b)\,\phi(z;\mu;\sigma^2)\,\d z\notag\\
		&=\frac{1}{\sqrt{2\pi\sigma^2}}\int_{-\infty}^\infty \e^{\alpha z}\,\Phi(az+b)\,\e^{-\frac{(z-\mu)^2}{2\sigma^2}}\,\d z\notag\\
		&=\frac{\e^{\alpha\mu}}{\sqrt{2\pi}}\int_{-\infty}^\infty \Phi(a\sigma u + a\mu + b)\,\e^{\alpha\sigma u -\frac{u^2}{2}}\,\d u\notag\\
		&=\frac{\e^{\alpha\mu + \frac{\alpha^2\sigma^2}{2}}}{\sqrt{2\pi}}\int_{-\infty}^\infty \Phi(a\sigma u + a\mu + b)\,\e^{-\frac{(u-\alpha\sigma)^2}{2}}\,\d u\notag\\
		&=\e^{\alpha\mu + \frac{\alpha^2\sigma^2}{2}}\int_{-\infty}^\infty \Phi(a\sigma u + a\mu + b)\,\phi(u;\alpha\sigma;1)\,\d u.\label{eq: int1}
	\end{align}
	Next, we not $\Phi(z) = \frac12\left\{1+\erf\left(\frac{z}{\sqrt2}\right)\right\}$ for $\erf(z) = \frac{2}{\sqrt{\pi}}\int_0^z\e^{-u^2}\,\d u$. So, one can continue \eqref{eq: int1} following
	\begin{align}
		&=\e^{\alpha\mu + \frac{\alpha^2\sigma^2}{2}}\int_{-\infty}^\infty \frac12\left\{1+\erf\left(\frac{a\sigma u + a\mu + b}{\sqrt2}\right)\right\}\,\phi(u;\alpha\sigma;1)\,\d u\notag\\
		&=\frac{\e^{\alpha\mu + \frac{\alpha^2\sigma^2}{2}}}{2}\left\{1 + \int_{-\infty}^\infty\erf\left(\frac{a\sigma u + a\mu + b}{\sqrt2}\right)\,\phi(u;\alpha\sigma;1)\,\d u\right\}.\label{eq: int2}
	\end{align}
	Now, from \cite[Section 4.3, Formula 13]{ng1969table}, for arbitrary $A,B,m\in\R$ and $\nu > 0$
	$$\int_{-\infty}^{\infty}\erf(Au+B)\,\phi(u;m;\nu^2)\,\d u = \erf\left(\frac{Am + B}{\displaystyle\sqrt{1+2A^2\nu^2}}\right).$$
	So, we can continue \eqref{eq: int2} as following
	\begin{align*}
		&=\frac{\e^{\alpha\mu + \frac{\alpha^2\sigma^2}{2}}}{2}\left\{1 + \erf\left(\frac{a\alpha\sigma^2 + a\mu + b}{\displaystyle\sqrt{1+a^2\sigma^2}}\Big/\sqrt2\right)\right\}\\
		&=\e^{\alpha\mu + \frac{\alpha^2\sigma^2}{2}}\Phi\Bigg(\frac{a(\mu+\alpha\sigma^2) + b}{\displaystyle\sqrt{1+a^2\sigma^2}}\Bigg).
	\end{align*}
\end{proof}
\begin{cor}[European Call Option]\label{cor: EU_Call}
	For the European call option with left continuous underlying asset price $S$ and strike-price $K$, denote
	\begin{gather*}
		a_i = \frac{\sigma\Delta t_{i+1}}{\displaystyle\sqrt{T-t_{i+1}}},\quad
		c_i = \displaystyle\sqrt{\frac{\Delta t_{i+1}}{T-t_{i+1}}},\quad
		b^-_i = b^+_i - \sigma\displaystyle\sqrt{T-t_{i+1}},\\
		b_i^+ = \frac{\ln\frac{S^\BS_{ti}}{K} + (\mu - \frac{\sigma^2}{2})\Delta t_{i+1} + \frac{\sigma^2}{2}(T-t_{i+1})}{\displaystyle\sigma\sqrt{T-t_{i+1}}},
	\end{gather*}
	then
	\begin{gather}\label{eq: ell_Call}
		\ell_{t_i}^{\Call} = \widehat{S.V}^{\Call}_{t_{i+1}}(t_i)\\
		= S_{t_i}\e^{(\mu + \lambda\epsilon_1)\Delta t_{i+1}}
		\left[S^\BS_{t_i}\e^{(\mu + \sigma^2)\Delta t_{i+1}}
		\Phi\left(\frac{2a_i + b^+_i}{\displaystyle\sqrt{1+c^2_i}}\right)
		-
		K\Phi\left(\frac{a_i + b^-_i}{\displaystyle\sqrt{1+c^2_i}}\right)\right],\notag
	\end{gather}
	and if $\ell_{t_i}^N\le\ell_{t_i}^{\Call}$, the CLH strategy admits $^*\pi^N_{t_{i+1}}=\pi^N_{t_i}$. If $\ell_{t_i}^N>\ell_{t_i}^{\Call}$ then
	\begin{equation}\label{eq: LS_Call}
	    ^*\pi^N_{t_{i+1}}(long) = \pi^N_{t_i} + \frac{\ell_{t_i}^N-\ell_{t_i}^{\Call}}{\kappa\widehat{S^2}_{t_{i+1}}(t_i)},\quad
	    ^*\pi^N_{t_{i+1}}(short) = \pi^N_{t_i} - \frac{\ell_{t_i}^N-\ell_{t_i}^{\Call}}{\kappa\widehat{S^2}_{t_{i+1}}(t_i)}.
	\end{equation}
\end{cor}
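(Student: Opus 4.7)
The plan is to apply the integral representation \eqref{eq: hat_SVi} of Lemma \ref{lmm: Gains} to the Black--Scholes call pricing function, reduce the resulting Gaussian integrals by Lemma \ref{lmm: Eint}, and then read off \eqref{eq: LS_Call} from Theorem \ref{thm: CLH}.

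First I would recall that in the discounted ($r=0$) setting the Black--Scholes call pricing function is
\[
g(t,x) = x\,\Phi(d_+(t,x)) - K\,\Phi(d_-(t,x)),\qquad d_\pm(t,x) = \frac{\ln(x/K)\pm\frac{\sigma^2}{2}(T-t)}{\sigma\sqrt{T-t}}.
\]
Substituting $x = S^\BS_{t_i}\,\e^{(\mu - \sigma^2/2)\Delta t_{i+1} + \sigma z}$ at $t = t_{i+1}$, the logarithm inside $d_\pm$ becomes affine in $z$ and both arguments collapse to $d_\pm = z/\sqrt{T-t_{i+1}} + b^\pm_i$, with $b^\pm_i$ exactly as defined in the corollary. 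This is the key algebraic simplification: each integrand now has the form $\e^{\alpha z}\,\Phi(az+b)$ that Lemma \ref{lmm: Eint} can evaluate.

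Next I would split the integrand along the two terms of $g$. The $\e^{\sigma z}$ prefactor in \eqref{eq: hat_SVi} combines with the $\e^{\sigma z}$ inside $x$ from the first term of $g$ to give $\e^{2\sigma z}\,\Phi(z/\sqrt{T-t_{i+1}} + b^+_i)$, while the second term leaves $\e^{\sigma z}\,\Phi(z/\sqrt{T-t_{i+1}} + b^-_i)$. Applying Lemma \ref{lmm: Eint} with $a = 1/\sqrt{T-t_{i+1}}$, mean $0$ and variance $\Delta t_{i+1}$, the quantity $a\alpha\sigma^2$ appearing inside $\Phi$ becomes $2a_i$ when $\alpha = 2\sigma$ and $a_i$ when $\alpha = \sigma$, producing the two $\Phi$-factors of \eqref{eq: ell_Call}. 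Collecting the three exponentials in each piece -- the $\e^{(\mu + \lambda\epsilon_1 - \sigma^2/2)\Delta t_{i+1}}$ prefactor of \eqref{eq: hat_SVi}, the $\e^{(\mu - \sigma^2/2)\Delta t_{i+1}}$ inside $x$ (present only in the first piece), and the $\e^{\alpha^2\Delta t_{i+1}/2}$ produced by Lemma \ref{lmm: Eint} -- compresses to the outer $S_{t_i}\e^{(\mu + \lambda\epsilon_1)\Delta t_{i+1}}$ prefactor in \eqref{eq: ell_Call} together with the additional $\e^{(\mu + \sigma^2)\Delta t_{i+1}}$ attached to the first $\Phi$-term.

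With $\ell_{t_i}^{\Call}$ thus established, Remark \ref{rem: CLH_test} identifies the quantity $U_{t_i}$ of Theorem \ref{thm: CLH} with $\ell_{t_i}^N - \ell_{t_i}^{\Call}$, so cases (I) and (II) of that theorem immediately yield the stated recursions \eqref{eq: LS_Call}. The main obstacle is not conceptual but arithmetic: tracking three exponential factors per piece and correctly identifying $a\alpha\sigma^2$ as $2a_i$ in the first integral versus $a_i$ in the second. A slip there would propagate silently into the prefactors $\e^{(\mu + \sigma^2)\Delta t_{i+1}}$ and $\e^{(\mu + \lambda\epsilon_1)\Delta t_{i+1}}$, so I would double-check those two combinations at the end by matching the power sums $(\mu + \lambda\epsilon_1 - \sigma^2/2)+(\mu - \sigma^2/2)+2\sigma^2 = 2\mu + \lambda\epsilon_1 + \sigma^2$ and $(\mu + \lambda\epsilon_1 - \sigma^2/2)+\sigma^2/2 = \mu + \lambda\epsilon_1$ against \eqref{eq: ell_Call}.
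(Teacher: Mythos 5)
Your proposal is correct and follows essentially the same route as the paper's proof: substitute the Black--Scholes call function into the representation \eqref{eq: hat_SVi}, note that $d^\pm_{t_{i+1}}$ becomes $z/\sqrt{T-t_{i+1}}+b^\pm_i$, evaluate the two resulting integrals with Lemma \ref{lmm: Eint} (with $\alpha=2\sigma$ and $\alpha=\sigma$), and conclude via Theorem \ref{thm: CLH} and Remark \ref{rem: CLH_test}. Your exponent bookkeeping $(\mu+\lambda\epsilon_1-\sigma^2/2)+(\mu-\sigma^2/2)+2\sigma^2=2\mu+\lambda\epsilon_1+\sigma^2$ and $(\mu+\lambda\epsilon_1-\sigma^2/2)+\sigma^2/2=\mu+\lambda\epsilon_1$ matches the paper's collected prefactors exactly.
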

\begin{proof}
	For the European call option with zero rate riskless asset we have the Black--Scholes solution to the equation \eqref{eq:BS} is
	\begin{gather*}
		g_\Call(t,S^\BS_t) = S^\BS_t\Phi(d^+_t) - K\Phi(d^-_t),\\
		d^+_t = \frac{\ln\frac{S^\BS_t}{K}+\frac{\sigma^2}{2}(T-t)}{\sigma\sqrt{T-t}},\\
		d^-_t = d^+_t - \sigma\sqrt{T-t}.
	\end{gather*}
	So by Lemma \ref{lmm: Gains}
	\begin{align}
		\ell_{t_i}^{\Call} &= \widehat{S.V}^{\Call}_{t_{i+1}}(t_i) = \E\left[S_{t_{i+1}}g_\Call(t_{i+1},S^\BS_{t_{i+1}})\Big|\F_{t_i}\right]\\
		&= S_{t_i}\e^{(\mu + \lambda\epsilon_1 - \frac{\sigma^2}{2})\Delta t_{i+1}}\notag\\
		&\times\int_{-\infty}^\infty\e^{\sigma z}g_\Call(t_{i+1},S^\BS_{t_{i+1}})\phi(z;0;\Delta t_{i+1})\d z\notag\\
		&= S_{t_i}\e^{(\mu + \lambda\epsilon_1 - \frac{\sigma^2}{2})\Delta t_{i+1}}\notag\\
		&\times\int_{-\infty}^\infty\e^{\sigma z}\Big(S_{t_{i+1}}\Phi(d^+_{t_{i+1}}) - K\Phi(d^-_{t_{i+1}})\Big)\phi(z;0;\Delta t_{i+1})\d z.\label{eq: call_1}
	\end{align}
	Here the inner integral is
	\begin{align*}
		&\E_z^{0,\Delta t_{i+1}}\left[\e^{\sigma z}\Big(S^\BS_{t_{i+1}}\Phi(d^+_{t_{i+1}}) - K\Phi(d^-_{t_{i+1}})\Big)\right]\\
		&= S^\BS_{t_i}\e^{(\mu - \frac{\sigma^2}{2})\Delta t_{i+1}}\;\E_z^{0,\Delta t_{i+1}}\left[\e^{2\sigma z}\Phi\left(\frac{\ln\frac{S^\BS_{t_i}\e^{(\mu - \frac{\sigma^2}{2})\Delta t_{i+1} + \sigma z}}{K}+\frac{\sigma^2}{2}(T-t_{i+1})}{\displaystyle\sigma\sqrt{T-t_{i+1}}}\right)\right]\\
		&- K\;\E_z^{0,\Delta t_{i+1}}\left[\e^{\sigma z}\Phi\left(\frac{\ln\frac{S^\BS_{t_i}\e^{(\mu - \frac{\sigma^2}{2})\Delta t_{i+1} + \sigma z}}{K}-\frac{\sigma^2}{2}(T-t_{i+1})}{\displaystyle\sigma\sqrt{T-t_{i+1}}}\right)\right]\\
		&= S^\BS_{t_i}\e^{(\mu - \frac{\sigma^2}{2})\Delta t_{i+1}}\;\E_z^{0,\Delta t_{i+1}}\left[\e^{2\sigma z}\Phi\left(\frac{\scriptstyle{\ln\frac{S^\BS_{t_i}}{K}+{(\mu-\frac{\sigma^2}{2})\Delta t_{i+1}+\sigma z}+\frac{\sigma^2}{2}(T-t_{i+1})}}{\displaystyle\sigma\sqrt{T-t_{i+1}}}\right)\right]\\
		&- K\;\E_z^{0,\Delta t_{i+1}}\left[\e^{\sigma z}\Phi\left(\frac{\scriptstyle{\ln\frac{S^\BS_{t_i}}{K}+{(\mu-\frac{\sigma^2}{2})\Delta t_{i+1}+\sigma z}-\frac{\sigma^2}{2}(T-t_{i+1})}}{\displaystyle\sigma\sqrt{T-t_{i+1}}}\right)\right],
	\end{align*}
	applying Lemma \ref{lmm: Eint} we can continue
	\begin{equation*}
		= S^\BS_{t_i}\e^{(\mu + \frac{3}{2}\sigma^2)\Delta t_{i+1} + y}\Phi\left(\frac{2a_i + b_i^+}{\displaystyle\sqrt{1 + c_i^2}}\right) - K\e^{\frac{\sigma^2}{2}\Delta t_{i+1}}\Phi\left(\frac{a_i + b_i^-}{\displaystyle\sqrt{1 + c_i^2}}\right).
	\end{equation*}
	Now, by substituting this to \eqref{eq: call_1} proves \eqref{eq: ell_Call},
	and applying this result to the Theorem \ref{thm: CLH} and Remark \ref{rem: CLH_test} proves the rest of this corollary.
\end{proof}
\section{Simulations}\label{sect: simulations}
In this section, we simulate the result of Corollary \ref{cor: EU_Call}. First we consider the case $\xi\equiv -0.5$, i.e., pure negative Poisson jumps in Figure \ref{fig1} and its decision tree in Figure \ref{fig1_1}. Then, we consider the case $\xi\equiv 0.5$, the pure positive Poisson jumps and its results are given in Figures \ref{fig2} and \ref{fig2_1}. The simulations are done for $T=12$ months (1 year), including 5 transaction payment times, with $\kappa = 0.1$. In both simulations we consider $\mu = 0.15, \sigma = 0.25$ and $\lambda = 0.3$ per month.
\begin{rem}\label{rem: reason}
	Considering the simulations and figures, one can see in branches that the optimal strategy $^*\pi^N_{t_{i+1}}$ changes, it is diverging from $\pi^N_{t_i}$ rapidly. To explain the reason of this, we must note to the equations \eqref{eq: VNi} and \eqref{eq:cl-hedge}. Indeed, we aim to minimize the difference of
	\[V^{\pi^N,\kappa}_{t_{i+1}} = V^{\pi^N,\kappa}_{t_i} + \pi^N_{t_i}(S_{t_{i+1}}-S_{t_i}) - \kappa S_{t_{i+1}}|\pi^N_{t_{i+1}}-\pi^N_{t_i}|,\]
	and $V^{\pi}_{t_{i+1}}$, with respect to the overall information up to time $t_i$, i.e., $\F_{t_i}$. While $V^{\pi}$ changes slightly (with no jump) from $t_i$ to $t_{i+1}$, the JD asset price $S$ changes roughly high in that time period if some jumps happen. In other word, the part $\pi^N_{t_i}(S_{t_{i+1}}-S_{t_i})$ is roughly high when some jumps happen on $[t_i,t_{i+1})$. So, the methodology of CLH tries to compensate this by the part $-\kappa S_{t_{i+1}}|\pi^N_{t_{i+1}}-\pi^N_{t_i}|$. To do this, it takes a value for $^*\pi^N_{t_{i+1}}$ extremely different from $\pi^N_{t_i}$, to overcome the reducing effect of proportion $\kappa$ as well as the roughness of the part $\pi^N_{t_i}(S_{t_{i+1}}-S_{t_i})$. In other word, the jumps in optimal strategy values are the {\it ``cost"} of overcoming to the jumps in the underlying asset price.
\end{rem}
\begin{figure}[H]
	\includegraphics[width=0.95\textwidth]{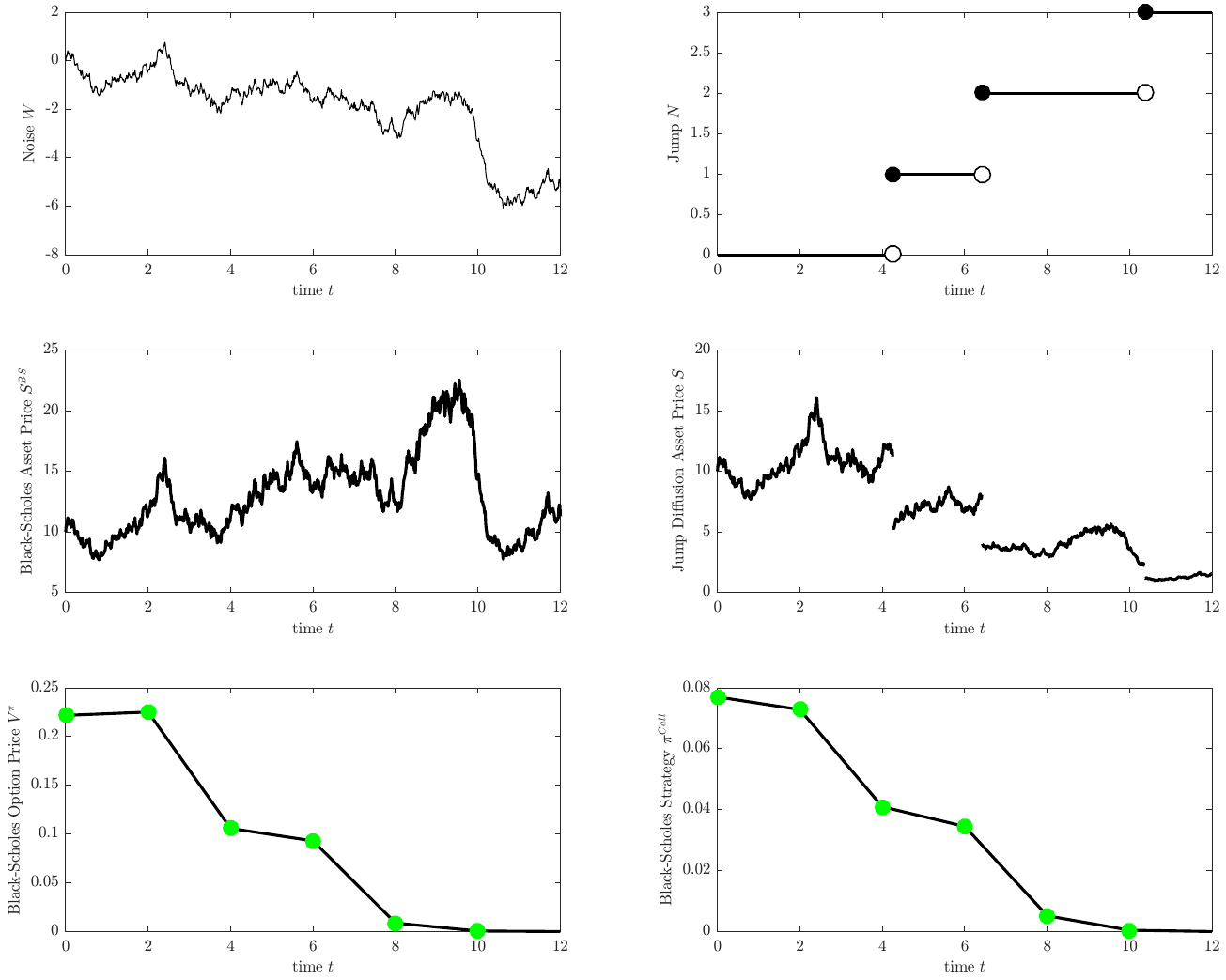}
	\caption{The underlying asset price, European call option price and strategy, the noise and jumps for constants $\mu = 0.15, \sigma = 0.25, \xi\equiv -0.5, T=12, \kappa = 0.1$ and $\lambda = 0.3$.}\label{fig1}
\end{figure}
\begin{figure}[H]
	\begin{forest}
    for tree={
   	calign=center,
   	grow'=east, 
   			circle,
   			draw,
   	text height=1.5ex, text depth=0.15ex, 
   	l sep+=1.25em,
   	s sep-=0.2em,
   	inner sep=0.25em
   }
		[$\frac{\pi_{t_0}^{N}}{0.08}$
		  [$\frac{\pi_{t_1}^{N}}{0.08}$
		    [$\frac{\pi_{t_2}^{N}}{0.08}$
		      [$\frac{\pi_{t_3}^{N,1}}{0.42}$
		        [$\frac{\pi_{t_4}^{N,1}}{1.26}$
		           [$\frac{\pi_{t_5}^{N,1}}{2.83}$]
		           [$\frac{\pi_{t_5}^{N,2}}{-0.31}$]]
		        [$\frac{\pi_{t_4}^{N,2}}{-0.43}$
		           [$\frac{\pi_{t_5}^{N,3}}{-0.43}$]
		        ]
		      ]
		      [$\frac{\pi_{t_3}^{N,2}}{-0.26}$
		        [$\frac{\pi_{t_4}^{N,3}}{-0.26}$
		           [$\frac{\pi_{t_5}^{N,4}}{4.59}$]
		           [$\frac{\pi_{t_5}^{N,5}}{-5.12}$]
		        ]
		      ]
		    ]
		  ]
		]
	\end{forest}
	\caption{The decision tree of optimal strategies  associated to the model in \ref{fig1}.}\label{fig1_1}
\end{figure}
\begin{figure}[H]
	\includegraphics[width=0.95\textwidth]{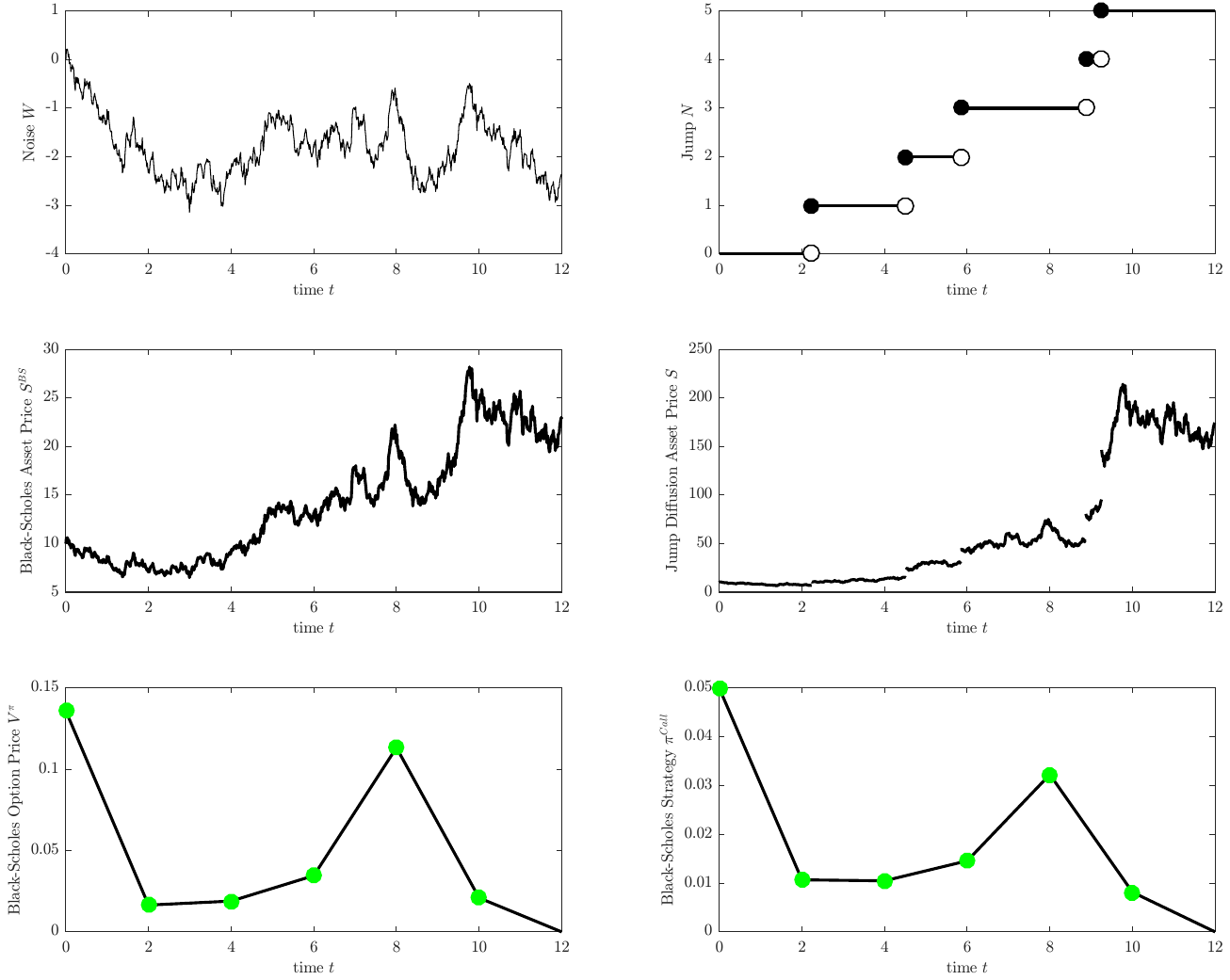}
	\caption{The underlying asset price, European call option price and strategy, the noise and jumps for constants $\mu = 0.15, \sigma = 0.25, \xi\equiv 0.5, T=12, \kappa = 0.1$ and $\lambda = 0.3$.}\label{fig2}
\end{figure}
\begin{figure}[H]
	\begin{forest}
for tree={
	calign=center,
	grow'=east, 
	circle,
	draw,
	text height=1.5ex, text depth=0.15ex, 
	l sep+=1.25em,
	s sep-=0.2em,
	inner sep=0.2em
}
[$\frac{\pi_{t_0}^{N}}{0.05}$
  [$\frac{\pi_{t_1}^{N,1}}{0.23}$
    [$\frac{\pi_{t_2}^{N,1}}{1.67}$
      [$\frac{\pi_{t_3}^{N,1}}{11.31}$
        [$\frac{\pi_{t_4}^{N,1}}{78.00}$
          [$\frac{\pi_{t_5}^{N,1}}{523.34}$]
          [$\frac{\pi_{t_5}^{N,2}}{-367.34}$]
        ]
        [$\frac{\pi_{t_4}^{N,2}}{-55.39}$
          [$\frac{\pi_{t_5}^{N,3}}{-55.39}$]
        ]
      ]
      [$\frac{\pi_{t_3}^{N,2}}{-7.97}$
        [$\frac{\pi_{t_4}^{N,3}}{-7.97}$
          [$\frac{\pi_{t_5}^{N,4}}{-7.97}$]
        ]
      ]
    ]
    [$\frac{\pi_{t_2}^{N,2}}{-1.20}$
      [$\frac{\pi_{t_3}^{N,3}}{-1.20}$
        [$\frac{\pi_{t_4}^{N,4}}{-1.20}$
          [$\frac{\pi_{t_5}^{N,5}}{-1.20}$]
        ]
      ]
    ]
  ]
  [$\frac{\pi_{t_1}^{N,2}}{-0.13}$
    [$\frac{\pi_{t_2}^{N,3}}{-0.13}$
      [$\frac{\pi_{t_3}^{N,4}}{-0.13}$
        [$\frac{\pi_{t_4}^{N,5}}{-0.13}$
          [$\frac{\pi_{t_5}^{N,6}}{-0.13}$]
        ]
      ]
    ]
  ]
]
	\end{forest}
	\caption{The decision tree of optimal strategies  associated to the model in \ref{fig2}.}\label{fig2_1}
\end{figure}
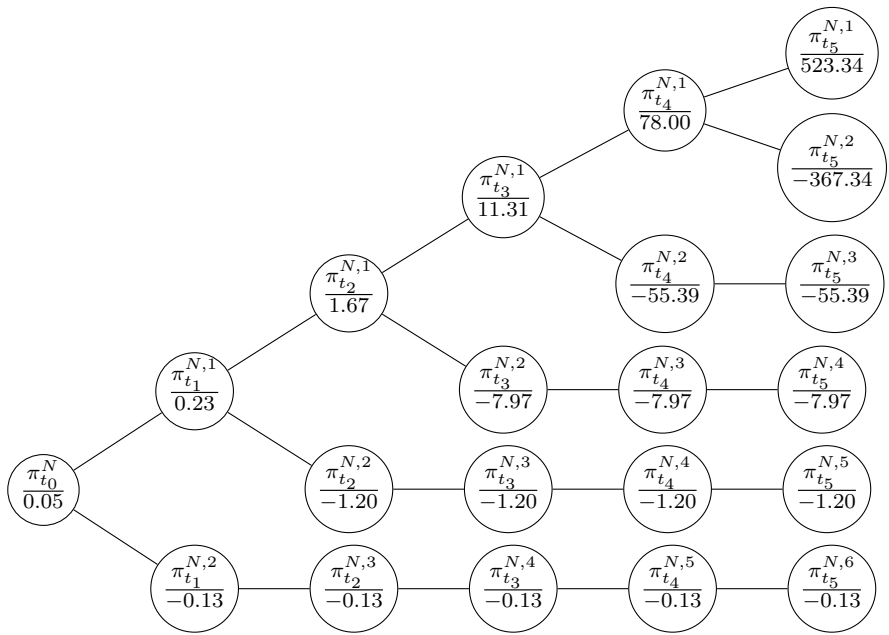
\bibliographystyle{siam}

\begin{thebibliography}{10}

\bibitem{almani2024prediction}
{\sc H.~M. Almani, F.~Shokrollahi, and T.~Sottinen}, {\em Prediction of
  {G}aussian {V}olterra processes with compound {P}oisson jumps}, Statist.
  Probab. Lett., 208 (2024), pp.~Paper No. 110054, 8.

\bibitem{Azmoodeh-2013}
{\sc E.~Azmoodeh}, {\em On the fractional {B}lack--{S}choles market with
  transaction costs}, Communications in Mathematical Finance, 2 (2013),
  pp.~21--40.

\bibitem{black1973pricing}
{\sc F.~Black and M.~Scholes}, {\em The pricing of options and corporate
  liabilities}, Journal of political economy, 81 (1973), pp.~637--654.

\bibitem{carr2002fine}
{\sc P.~Carr, H.~Geman, D.~B. Madan, and M.~Yor}, {\em The fine structure of
  asset returns: An empirical investigation}, The Journal of Business, 75
  (2002), pp.~305--332.

\bibitem{das1996exact}
{\sc S.~R. Das and S.~Foresi}, {\em Exact solutions for bond and option prices
  with systematic jump risk}, Review of derivatives research, 1 (1996),
  pp.~7--24.

\bibitem{Denis-Kabanov-2010}
{\sc E.~Denis and Y.~Kabanov}, {\em Mean square error for the {L}eland-{L}ott
  hedging strategy: convex pay-offs}, Finance Stoch., 14 (2010), pp.~625--667.

\bibitem{follmer1991hedging}
{\sc H.~F{\"o}llmer and M.~Schweizer}, {\em Hedging of contingent claims under
  incomplete information}, Applied stochastic analysis, 5 (1991), pp.~19--31.

\bibitem{geman2002pure}
{\sc H.~Geman}, {\em Pure jump {L}{\'e}vy processes for asset price modelling},
  Journal of Banking \& Finance, 26 (2002), pp.~1297--1316.

\bibitem{geman2001asset}
{\sc H.~Geman, D.~B. Madan, and M.~Yor}, {\em Asset prices are {B}rownian
  motion: only in business time}, in Quantitative Analysis In Financial
  Markets: Collected Papers of the New York University Mathematical Finance
  Seminar (Volume II), World Scientific, 2001, pp.~103--146.

\bibitem{geman2001time}
{\sc H.~Geman, D.~B. Madan, and M.~Yor}, {\em Time changes for {L}{\'e}vy
  processes}, Mathematical Finance, 11 (2001), pp.~79--96.

\bibitem{ingersoll1976theoretical}
{\sc J.~E. Ingersoll~Jr}, {\em A theoretical and empirical investigation of the
  dual purpose funds: An application of contingent-claims analysis}, Journal of
  Financial Economics, 3 (1976), pp.~83--123.

\bibitem{ito1944integral}
{\sc K.~It{\^o}}, {\em {S}tochastic integral}, Proceedings of the Imperial
  Academy, 20 (1944), pp.~519--524.

\bibitem{ito1951sde}
\leavevmode\vrule height 2pt depth -1.6pt width 23pt, {\em On stochastic
  differential equations}, vol.~4, American Mathematical Society New York,
  1951.

\bibitem{ito2006essentials}
\leavevmode\vrule height 2pt depth -1.6pt width 23pt, {\em Essentials of
  stochastic processes}, vol.~231, American Mathematical Soc., 2006.

\bibitem{Kabanov-Safarian-2009}
{\sc Y.~Kabanov and M.~Safarian}, {\em Markets with transaction costs},
  Springer Finance, Springer-Verlag, Berlin, 2009.
\newblock Mathematical theory.

\bibitem{kou2001option}
{\sc S.~Kou and H.~Wang}, {\em Option pricing under a jump-diffusion model},
  (2001).

\bibitem{kou2002jump}
{\sc S.~G. Kou}, {\em A jump-diffusion model for option pricing}, Management
  science, 48 (2002), pp.~1086--1101.

\bibitem{lamberton2011introduction}
{\sc D.~Lamberton and B.~Lapeyre}, {\em Introduction to stochastic calculus
  applied to finance}, Chapman and Hall/CRC, 2011.

\bibitem{leland1985option}
{\sc H.~E. Leland}, {\em Option pricing and replication with transactions
  costs}, The {J}ournal of {F}inance, 40 (1985), pp.~1283--1301.

\bibitem{madan2001financial}
{\sc D.~Madan}, {\em Financial modeling with discontinuous price processes},
  L{\'e}vy processes--theory and applications. Birkhauser, Boston,  (2001).

\bibitem{mandelbrot1997variation}
{\sc B.~B. Mandelbrot}, {\em The variation of certain speculative prices},
  Springer, 1997.

\bibitem{mandelbrot1968fractional}
{\sc B.~B. Mandelbrot and J.~W. Van~Ness}, {\em Fractional {B}rownian motions,
  fractional noises and applications}, SIAM review, 10 (1968), pp.~422--437.

\bibitem{merton1973theory}
{\sc R.~C. Merton}, {\em Theory of rational option pricing}, The Bell Journal
  of economics and management science,  (1973), pp.~141--183.

\bibitem{MERTON1976125}
{\sc R.~C. Merton}, {\em Option pricing when underlying stock returns are
  discontinuous}, Journal of Financial Economics, 3 (1976), pp.~125--144.

\bibitem{ng1969table}
{\sc E.~W. Ng and M.~Geller}, {\em A table of integrals of the error
  functions}, Journal of Research of the National Bureau of Standards B, 73
  (1969), pp.~1--20.

\bibitem{samoradnitsky-Taqqu-1994}
{\sc G.~Samoradnitsky and M.~S. Taqqu}, {\em Stable non-{G}aussian random
  processes: stochastic models with infinite variance}, Chapman \& Hall, CRC,
  1st~ed., 1994.

\bibitem{samuelson1965rational}
{\sc P.~A. Samuelson}, {\em Rational theory of warrant pricing}, in {H}enry
  {P}. {M}c{K}ean {J}r. {S}electa, Springer, 1965, pp.~195--232.

\bibitem{schweizer1992mean}
{\sc M.~Schweizer}, {\em Mean-variance hedging for general claims}, The
  {A}nnals of {A}pplied {P}robability,  (1992), pp.~171--179.

\bibitem{schweizer1994approximating}
\leavevmode\vrule height 2pt depth -1.6pt width 23pt, {\em Approximating random
  variables by stochastic integrals}, The {A}nnals of {P}robability,  (1994),
  pp.~1536--1575.

\bibitem{schweizer1995minimal}
\leavevmode\vrule height 2pt depth -1.6pt width 23pt, {\em On the minimal
  martingale measure and the {F}{\"o}llmer-{S}chweizer decomposition},
  Stochastic analysis and applications, 13 (1995), pp.~573--599.

\bibitem{Shokrollahi-Kilicman-Magdziarz-2016}
{\sc F.~Shokrollahi, A.~K$\mathrm{\i{} l\i{}\c{c}}$man, and M.~Magdziarz}, {\em
  Pricing {E}uropean options and currency options by time changed mixed
  fractional {B}rownian motion with transaction costs}, Int. J. Financ. Eng., 3
  (2016), pp.~1650003, 22.

\bibitem{SHOKROLLAHI201785}
{\sc F.~Shokrollahi and T.~Sottinen}, {\em Hedging in fractional
  {B}lack--{S}choles model with transaction costs}, Statistics \& Probability
  Letters, 130 (2017), pp.~85--91.

\bibitem{Sottinen-Viitasaari-2016-preprintb}
{\sc T.~{Sottinen} and L.~{Viitasaari}}, {\em {Prediction Law of fractional
  Brownian Motion}}, ArXiv e-prints,  (2016).

\bibitem{sottinen2018conditional}
{\sc T.~Sottinen and L.~Viitasaari}, {\em Conditional-mean hedging under
  transaction costs in {G}aussian models}, International journal of theoretical
  and applied finance, 21 (2018), p.~1850015.

\bibitem{Wang-2010a}
{\sc X.-T. Wang}, {\em Scaling and long-range dependence in option pricing {I}:
  pricing {E}uropean option with transaction costs under the fractional
  {B}lack-{S}choles model}, Phys. A, 389 (2010), pp.~438--444.

\bibitem{Wang-2010b}
\leavevmode\vrule height 2pt depth -1.6pt width 23pt, {\em Scaling and long
  range dependence in option pricing, {IV}: pricing {E}uropean options with
  transaction costs under the multifractional {B}lack-{S}choles model}, Phys.
  A., 389 (2010), pp.~789--796.

\bibitem{Wang-Yan-Tang-Zhu-2010}
{\sc X.-T. Wang, H.-G. Yan, M.-M. Tang, and E.-H. Zhu}, {\em Scaling and
  long-range dependence in option pricing {III}: a fractional version of the
  {M}erton model with transaction costs}, Phys. A, 389 (2010), pp.~452--458.

\bibitem{Wang-Zhu-Tang-Yan-2010}
{\sc X.-T. Wang, E.-H. Zhu, M.-M. Tang, and H.-G. Yan}, {\em Scaling and
  long-range dependence in option pricing {II}: pricing {E}uropean option with
  transaction costs under the mixed {B}rownian-fractional {B}rownian model},
  Phys. A, 389 (2010), pp.~445--451.

\end{thebibliography}

\end{document}